\documentclass[letterpaper,twocolumn,10pt]{article}
\usepackage{usenix-2020-09}

\usepackage{titling}

\date{}

\usepackage{nopageno}
\usepackage[english]{babel}
\usepackage{blindtext}

\usepackage{tikz}
\usepackage{amsmath}

\usepackage{algorithm}
\usepackage{algpseudocode}

\usepackage{amsthm}
\usepackage{multirow}
\usepackage{color}
\usepackage{times}
\usepackage{tabularx}
\usepackage{tabulary}
\usepackage{subcaption}

\usepackage{microtype} % To save space
\usepackage{enumitem}
\usepackage{graphbox}
\usepackage{listings}
\definecolor{commentgreen}{rgb}{0.18, 0.55, 0.34}
\usepackage{tikz}

\def\Snospace~{\S{}}

\newtheorem{theorem}{Theorem}

\usepackage{comment}
\usepackage{xspace}
\begin{document}

\lstdefinelanguage{json}{
    % numbers=left,
    % numberstyle=\small,
    % frame=single,
    rulecolor=\color{black},
    % showspaces=false,
    % showtabs=false,
    % breaklines=true,
    postbreak=\raisebox{0ex}[0ex][0ex]{\ensuremath{\color{green}\hookrightarrow\space}},
    % breakatwhitespace=true,
    % basicstyle=\ttfamily\small,
    upquote=true,
    morestring=[b]",
    stringstyle=\color{commentgreen},
    literate=
     *{0}{{{\color{red}0}}}{1}
      {1}{{{\color{red}1}}}{1}
      {2}{{{\color{red}2}}}{1}
      {3}{{{\color{red}3}}}{1}
      {4}{{{\color{red}4}}}{1}
      {5}{{{\color{red}5}}}{1}
      {6}{{{\color{red}6}}}{1}
      {7}{{{\color{red}7}}}{1}
      {8}{{{\color{red}8}}}{1}
      {9}{{{\color{red}9}}}{1}
      {\{}{{{\color{blue}{\{}}}}{1}
      {\}}{{{\color{blue}{\}}}}}{1}
      {[}{{{\color{blue}{[}}}}{1}
      {]}{{{\color{blue}{]}}}}{1},
}

\newenvironment{denseitemize}{
\begin{itemize}[topsep=2.5pt, partopsep=0pt, leftmargin=1.5em]
  \setlength{\itemsep}{2.5pt}
  \setlength{\parskip}{0pt}
  \setlength{\parsep}{0pt}
}{\end{itemize}}

\newcounter{packednmbr}
\newenvironment{denseenum}{
\begin{list}{\thepackednmbr.}{\usecounter{packednmbr}
\setlength{\itemsep}{0pt}
\addtolength{\labelwidth}{4pt}
\setlength{\leftmargin}{12pt}
\setlength{\listparindent}{\parindent}
\setlength{\parsep}{1pt}
\setlength{\topsep}{0pt}}}
{\end{list}
}

\newcommand{\redx}{\color{BrickRed} \ding{55}}
\newcommand{\greencheck}{\color{OliveGreen} $\checkmark$}

\newcommand{\todo}[1]{\textbf{\color{red}TODO: #1}}
\newcommand{\cut}[1]{}

\newcommand{\name}[0]{DAK}
\newcommand{\taskname}[0]{service}

% To make the personalized notes go away, comment out these lines...

% \newcommand{\shouxu}[1]{\textbf{\color{orange}SX: #1}}
% \newcommand{\jxlin}[1]{\textbf{\color{blue}JX: #1}}
% \newcommand{\zy}[1]{{\textbf{\color{green}ZY: #1}}}

% ...and uncomment these lines.
% \newcommand{\fixme}[1]{}
\newcommand{\shouxu}[1]{}
\newcommand{\jxlin}[1]{}
\newcommand{\zy}[1]{}

\newcommand{\ratio}[2]{#1\,:\,#2}
\newcommand{\scaption}[1]{\caption{\footnotesize \textbf{#1}}}
\newcommand{\scaptionl}[2]{\caption{\footnotesize \textbf{#1}#2}}

\newcommand{\mypara}[1]{\smallskip\noindent{\bf {#1}}~}
\newcommand{\ie}[0]{\textit{i.e.,}\xspace}
\newcommand{\eg}[0]{\textit{e.g.,}\xspace}
\newcommand{\etc}[0]{\textit{etc.,}\xspace}
\newcommand{\vs}[0]{\textit{vs.}\xspace}
\newcommand{\etal}[0]{\textit{et al.}\xspace}
% \conferenceinfo{HotNets 2021} {}
% \CopyrightYear{2021}
% \crdata{X}
% \date{}

%%%%%%%%%%%% THIS IS WHERE WE PUT IN THE TITLE AND AUTHORS %%%%%%%%%%%%
% \titlebanner{DRAFT---Do not distribute}

% \title{\vspace{-0.7in}Enabling Portable and High-Performance SmartNIC Programs with Alkali}
\title{\name{}: Direct-Access-Enabled GPU Memory Offloading with Optimal Efficiency for LLM Inference }
\date{\vspace{-1.8em}}

% Usenix Authors

% \author{Under submission}

% \preauthor{\vspace{-5em}}

% Usenix Authors
\author{
{\rm Shouxu Lin}\\
Cornell University
\and
{\rm Zhiyuan Guo}\\
Cornell University
\and
{\rm Jiaxin Lin}\\
Cornell University
}

% \end author

\maketitle

\newcommand\blfootnote[1]{%
  \begingroup
  \renewcommand\thefootnote{}\footnote{#1}%
  \addtocounter{footnote}{-1}%
  \endgroup
}

% \vspace{-}
%-------------------------------------------------------------------------------
\begin{abstract}
LLM inference is constrained by GPU memory capacity and bandwidth. Tiered memory architectures mitigate this by allowing the GPU to offload memory to the remote tier. However, existing memory offloading frameworks rely on prefetching data into local GPU HBM. This approach underutilizes system resources by introducing HBM contention, squandering memory capacity, and creating pipeline bubbles. We show that enabling direct GPU access to remote memory significantly outperforms prefetching, achieving optimal aggregate system bandwidth. We propose \name{}, an end-to-end direct-access memory offloading framework that repurposes the Tensor Memory Accelerator (TMA) to asynchronously fetch offloaded weights and KV caches directly from remote memory into GPU shared memory (SMEM). To maximize remote access performance, \name{} introduces a greedy algorithm to determine optimal per-operation offloading ratios, alongside active congestion control and TMA multicast to eliminate interconnect bottlenecks and read amplification. Evaluations across diverse architectures show that \name{} achieves near-optimal bandwidth aggregation, with up to 3$\times$ performance gains on NVLink-C2C and 1.8$\times$ on PCIe systems compared to state-of-the-art memory offloading baselines.
\end{abstract}

% \vspace{.3cm}
% \begingroup\small\noindent\raggedright\textbf{PVLDB Artifact Availability:}\\
% The source code, data, and/or other artifacts h.
% \endgroup

% \begingroup
% \renewcommand\thefootnote{}\footnote{\noindent
% \textbf{\color{red}{This work is under submission, please do not distribute.}}
% }\endgroup

% For example, in the Nvidia Grace Hopper (GH200) architecture, allowing the GPU to access tightly integrated CPU memory via NVLink-C2C increases the GPU's effective memory capacity by 500\% and expands its total available memory bandwidth by 12.5\%. \jxlin{I think GPU+ CXL can also increase aggregate bandwidth, need to check}. We are seeing an increasing number of industry pushes in next-generation GPU in adopting tired memory architecture to scale memory capacity and bandwidth~\cite{}.

\section{Introduction}
\label{sec:intro}

Large Language Models (LLMs) are increasingly starved for memory capacity and bandwidth due to massive parameter scaling, ultra-long contexts, and large inference batch sizes. Scaling the GPU's local High Bandwidth Memory (HBM) to meet this demand is prohibitively expensive and fundamentally constrained by die area and thermal limits~\cite{gholami2024aimemorywall}. To breach this "memory wall," recent architectures adopt \textit{tiered memory designs}. By connecting the GPU to remote memory (e.g., host CPU memory via NVLink-C2C or CXL)~\cite{li2025fenghuang,fusco2024understanding, cxlconsortium2023spec,nvidia2023gh200arch}, these systems decouple compute and memory scaling. For instance, the NVIDIA Grace Hopper (GH200)~\cite{nvidia2023gh200arch} utilizes NVLink-C2C to connect the GPU with the CPU memory, which increases the effective memory capacity by 500\% and bandwidth by 12.5\%.

\begin{figure}[t!]
    \centering
    \includegraphics[width=0.65\linewidth]{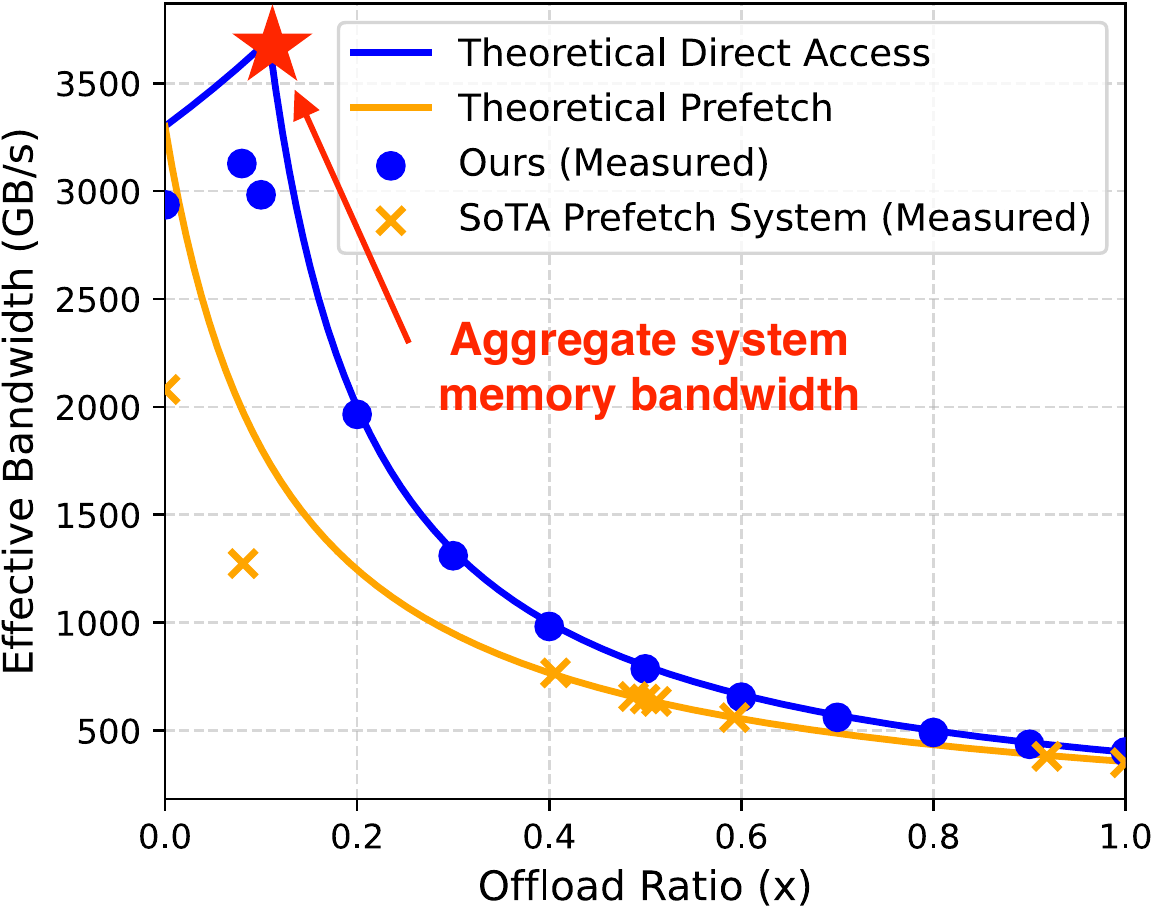}
    \vspace{-0.2cm}
    \scaption{Comparison of a prefetch-based system and our direct-access-based system, measured on GH200 with OPT-30b.}
     \vspace{-0.4cm}
    \label{fig:intro}
\end{figure}

The central challenge in tiered memory is exploiting remote capacity and bandwidth without suffering interconnect latency penalties. To overcome the challenge of utilizing remote capacity without suffering interconnect latency penalties, prefetching-based frameworks have been proposed. In the context of LLM inference, these frameworks offload a subset of weights and KV caches in remote memory and hide the remote data fetching latencies, with aid of computation-aware prefetching~\cite{sheng2023flexgen, kamahori2024fiddler, xu2024pie, vijaya2025aqua, jiang2025neo, kim2025lia, cao2025moe},  by analyzing the execution graph and preloading required inputs (weights, and KV caches) from remote memory into GPU HBM before a specific computation block begins. 

However, we observe that prefetching fundamentally underutilizes both the bandwidth and capacity of tiered memory systems. This bandwidth and capacity underutilization occurs for two reasons. First, prefetching fails to exploit remote bandwidth and leads to degraded local GPU memory bandwidth, as the incoming prefetch traffic (moving data from remote memory to HBM) competes directly with active compute kernels for HBM bandwidth. Second, prefetching limits maximum system memory capacity by forcing data to be staged in local GPU HBM before computation, consuming extra bounce buffers that reduce GPU HBM capacity. Third, prefetching-based approaches are usually coarse-grained (e.g., layer-based~\cite{sheng2023flexgen,xu2024pie,kim2025lia}), which seldom achieve perfect overlap between data transfer and computation, leading to pipeline bubbles and low efficiency. Consequently, as shown in \autoref{fig:intro}, the theoretical performance bound of prefetching falls significantly short of the ideal aggregate system bandwidth (i.e., local GPU HBM plus the host-GPU interconnect). Furthermore, real-world prefetching systems suffer an additional 20\% performance degradation due to pipeline bubbles caused by imperfect compute-communication overlap.

 We argue that enabling the GPU to directly access remote memory can yield significantly better performance than prefetching data through HBM,  by allowing GPU Streaming Multi-processors (SMs) to fetch offloaded data directly into shared memory (SMEM). Directly allowing GPU SM’s to access remote memory, we can bypass HBM staging entirely and fully utilize aggregate system bandwidth and capacity, thus reducing both memory and memory bandwidth wastage. In \autoref{fig:intro}, we show that this direct access mechanism can provide a strictly superior theoretical and empirical performance compared to prefetching.

To enable direct access to remote memory, we propose \name{} (Direct Access Kernel). Traditionally, a direct data path required GPU threads to explicitly issue load/store instructions over the interconnect, consuming valuable SM register space and instruction cycles. \name{} overcomes this by utilizing the Tensor Memory Accelerator (TMA)~\cite{nvidia2022hopper}, a specialized hardware engine introduced in the Hopper architecture. We observe that, TMA, which was originally designed to manage asynchronous transfers between local HBM and SMEM, can be repurposed to execute asynchronous, direct accesses from remote host memory straight into SMEM. \textit{To the best of our knowledge, we are the first to utilize the TMA for remote access and memory offload.}

However, identifying this hardware capability is only the first step, and we found that a naive implementation of TMA-based remote access is insufficient and can yield non-optimal performance. To make direct remote access achieve near-theoretical maximum performance, we must solve three critical system-level challenges:
1) \textbf{Determine Offload Ratio:} Ignoring the characteristics of memory- and compute-bound kernels in the inference pipeline and blindly offloading memory can underutilize the system. An analytical algorithm is required to determine the optimal offloading ratio for each specific kernel. 
2) \textbf{Interconnect Congestion:} We observe that unconstrained in-flight TMA accesses to host memory can trigger severe intra-GPU interconnect congestion and stall the GPU's local HBM accesses.
3) \textbf{Uncacheable Host Memory:} CPU memory always bypasses the GPU L2 cache (even on coherent architectures like GH200), and uncacheable host memory can cause huge read amplification over interconnects.

To solve these challenges, we propose \name{}, an end-to-end direct-access memory offloading framework. \name{} optimally partitions weights and KV caches across host/GPU memory tiers and employs TMA-enabled warp-specialized kernels to decouple memory fetching from matrix math and efficiently hides heterogeneous memory latencies. \name{} overcomes challenges and achieves performance near the theoretical maximum (\autoref{fig:intro}) with contributions:

\begin{denseenum}
    \item \textbf{Direct Memory Offload Architecture:} We develop warp-specialized \textit{SplitK\_GEMM} and \textit{SplitK\_FlashAttn} kernels that compute directly over tiered memory and leverage TMA to fetch offloaded data from remote memory to SMEM. This achieves true bandwidth aggregation, expands memory capacity, and hides remote memory latencies through in-kernel compute-communication overlap.
    \item \textbf{Optimal Per-operation Offload Ratio:} We formulate the memory offloading ratio as a performance optimization problem. We design an efficient greedy algorithm that analytically determines the optimal partitioning ratios for individual compute-bound and memory-bound operations, with provable optimality.
    \item \textbf{Efficient TMA Access to Host Memory:} We identify and mitigate the congestion and read-amplification bottlenecks of naive TMA remote access. We implement active congestion control to prevent host memory requests from stalling local HBM, and we utilize a TMA multicast and host-locality-first tile scheduling to eliminate read amplification for uncacheable host data.
    \item \textbf{Cross-Architecture Generality:} Our direct-access kernels can be adapted across varying interconnect bandwidths (e.g., NVLink-C2C vs. PCIe), distinct GPU architectures, and diverse model configurations, while also ensuring robust performance.
\end{denseenum} 

We provide a comprehensive evaluation of \name{}'s end-to-end effectiveness. \name{} is implemented with an accessible PyTorch interface and evaluated across two distinct hardware environments: an NVLink-C2C-based GH200 and a PCIe-based RTX 6000 Blackwell system. We test across multiple models (OPT and Llama) under various batch sizes and sequence length configurations. Compared to multiple state-of-the-art, prefetching-based memory offloading systems, our results show that \name{} consistently improves performance across all tested configurations and architectures, delivering up to a 3$\times$ higher throughput on the GH200 and a 1.8$\times$ gain on the RTX 6000 Pro Blackwell. \name{} is fully open-sourced at \url{https://github.com/shouxulin/DirectAccessKernel.git}.

\section{Background and Motivation}
\label{sec:motiv}

\subsection{Scaling Memory with Tiered Architecture}
The rapid evolution of LLM places unprecedented pressure on \textit{memory capacity} and \textit{bandwidth} in inference~\cite{kim2024breakthrough,wolters2024memory,alizadeh2024llm}. 

\noindent\textbf{Memory Capacity bottleneck:} 
Modern LLMs require massive memory capacity for both model weights and KV-Cache. For example, a 671B-parameter model in FP16 requires 1.34 TB of memory for weights alone. At the same time, the KV cache scales linearly with both batch size and sequence length, and can easily exceed 100 GB for a 70B model serving a 100K context at a batch size of 16, rapidly exhausting the GPU’s local HBM capacity.

% LLM inference workloads rely heavily on the GPU's memory capacity to accommodate both exploding model weights. For example, a 671B-parameter model in FP16 requires over 1.34 TB of memory just for the weights. Beyond weights, the KV-Cache size also posts large requirements on capacity. KV-Cache scales linearly with batch size and context length. With ultra-long context inference (e.g., 100K+ tokens) and high-throughput, large-batch serving, it quickly drain up GPU memory, for example, serving a 70B-parameter model with a 100K context window at a batch size of just 16 requires over 100 GB of memory exclusively to store the KV-Cache.

\noindent\textbf{Memory Bandwidth bottleneck:} 
LLM inference requires ultra-high memory bandwidth. For example, during decoding, the entire model weights and the accumulated KV-Cache must be loaded into the computational units for every generated token. This results in low arithmetic intensity, where inference performance is bounded by memory bandwidth\cite{kim2025lia,agrawal2024taming}.

% LLM inference has distinct characteristics of the two inference phases: prefill and decode. While the initial prefill phase (processing the input prompt) is generally compute-intensive, the subsequent decode phase (generating tokens one at a time) is strictly memory-bound. During decoding, the entire model footprint and the accumulated KV-Cache must be loaded from memory into the computational units for every single generated token. This results in extremely low arithmetic intensity, leaving the GPU's compute units severely starved for data if the underlying memory bandwidth is insufficient.
\begin{figure}[t!]
    \centering
    \includegraphics[width=0.99\linewidth]{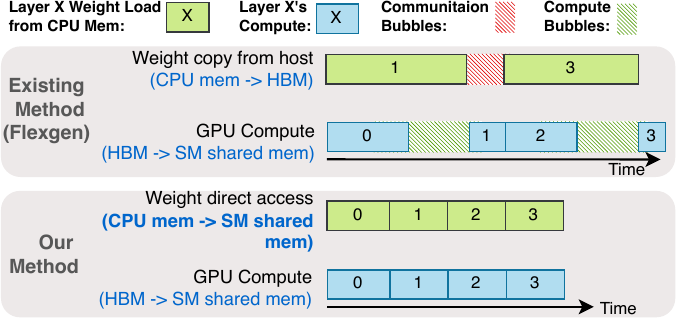}
    \vspace{-0.2cm}
    \scaption{Comparing the copy-based method with our method. In FlexGen, layer 1 and 3 weights are offloaded to the CPU.}
     \vspace{-0.2cm}
    \label{fig:flexgen}
\end{figure}

\noindent\textbf{Scaling Memory with Tiered Architecture:}
% Despite the increasing demand for memory capacity and bandwidth, scaling local High Bandwidth Memory (HBM) directly on the GPU die is fundamentally constrained by strict thermal envelopes, power limitations, and packaging reticle limits~\cite{}. 
An effective strategy to breach this "memory wall" is to allow accelerators access to external memory. This is enabled by the tiered memory, where the GPU is connected to a remote memory tier such as host CPU memory and CXL memory pool~\cite{li2023pond,gouk2023memory}. This enables memory resources to scale independently of GPU compute, providing large capacity expansions.
% with great cost-efficiency.

This tiered paradigm is also driven by rapid advancements in high-speed compute-memory interconnects like NVLink-C2C~\cite{fusco2024understanding,schieffer2024harnessing}, die-to-die interconnects \cite{feng2023heterogeneous}, and AMD Infinity Fabric~\cite{pearson2023interconnect}. For example, in the Grace Hopper (GH200) architecture, the NVLink-C2C connects CPU and GPU with a bandwidth of 450 GB/s per direction, which is close to the host CPU memory bandwidth (500 GB/s). NVLink-C2C increases the GPU's effective capacity by 500\% and expands the total available system memory bandwidth by 12.5\% \footnote{Total available system memory bandwidth is defined by $GPU\_HBM\_BW + MIN(NVLink\_C2C\_BW, HOST\_DRAM\_BW)$}.

\subsection{Memory Offloading}
The central research question in tiered memory architectures is how to use remote memory capacity and bandwidth without falling victim to interconnect latency and throughput constraints. Prior work focuses exclusively on using remote memory merely as a capacity pool, managing the data by explicitly copying it between remote memory and GPU local HBM.

\noindent\textbf{Paging with Prefetching:} Existing works propose general-purpose GPU memory management systems that access remote memory using paging abstractions \cite{sheng2023flexgen,kamahori2024fiddler,kim2025lia,xu2024pie}. These systems rely on prefetching to bring remote pages into the GPU's local memory before they are needed. Frameworks such as CUDA Managed Memory~\cite{li2015evaluation,schieffer2024harnessing} implement prefetching algorithms that copy memory pages from the remote tier into local GPU HBM prior to kernel execution. These prefetching decisions can be triggered either explicitly through programmer-controlled APIs (e.g., cudaMemPrefetchAsync) or implicitly via software-~\cite{allen2021depth} or hardware-managed prefetchers~\cite{schieffer2024harnessing}.

\noindent\textbf{LLM Pipeline Orchestration:}
Another line of work builds customized GPU kernels or inference pipelines that explicitly control when data is moved from remote memory into local GPU HBM. These systems frame memory offloading as a scheduling optimization problem to maximize the overlap between remote memory communication and active GPU computation. For example, FlexGen~\cite{sheng2023flexgen} formulates a linear programming solver to orchestrate model weights and KV cache placement across the GPU, CPU, and disk. To overlap communication with computation, it uses layer-by-layer prefetching. As shown in \autoref{fig:flexgen} (1), while the GPU executes computation for layer $X$, it concurrently fetches the weights for layer $X+1$ from CPU memory to GPU HBM.

Systems like Neo~\cite{jiang2025neo} and Lia~\cite{kim2025lia} jointly perform memory and computation offloading. Computation offloading is typically beneficial when: (1) operations are strictly memory-bound, making the slower CPU a reasonable alternative, and (2) interconnect bandwidth (\eg PCIe) is significantly lower than host DRAM bandwidth, and by offloading computation to host can reduce costly data movement across interconnect. However, as next-generation interconnects (\eg NVLink-C2C in Grace Hopper and Blackwell) achieve bandwidths matching CPU DRAM, the benefits of host-side computation offloading diminish~\cite{kamahori2024fiddler,kim2025lia}.

% Aqua \cite{} leverages NVLink to expand the memory across multi-GPU domains via memory copies and similarly schedules the timing for memory copies to hide interconnect latencies.

\begin{figure}[t!]
    \centering
    \includegraphics[width=0.99\linewidth]{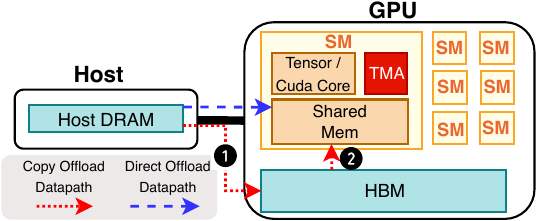}
    % \vspace{-0.2cm}
    \scaption{Direct memory offload vs. copy data paths.}
     \vspace{-0.2cm}
    \label{fig:directoffload}
\end{figure}

\subsection{Limitation of Copy-Based Offloading}
\label{subsec:limitation}
We observe that all existing copy-based memory offloading solutions are fundamentally suboptimal in utilizing both the bandwidth and capacity of remote memory.

\noindent \textbf{Underutilized System Memory Bandwidth.}
Existing copy-based solutions degrade local GPU memory bandwidth rather than aggregating total system bandwidth. As shown in \autoref{fig:directoffload}, the standard datapath requires copying data from the host to HBM (\textcircled{1}), and then reading it from HBM to SM shared memory (\textcircled{2}). Because HBM is not fully bidirectional, the incoming host writes (\textcircled{1}) contend directly with active SM reads (\textcircled{2}). In memory-bound kernels, this contention degrades throughput; for example, prefetching layer 1 weights directly interferes with layer 0 computation (\autoref{fig:flexgen}). This degradation scales with the interconnect-to-local bandwidth ratio. 
Background copies can degrade local memory performance by up to 10\% on a GH200, and by roughly 4-7\% on PCIe-based consumer-grade GPU (64 GB/s PCIe Gen4 vs. 960 GB/s GDDR7)~\cite{nvidia2025rtx50series, nvidia2025rtxpro_blackwell_whitepaper}.

\noindent\textbf{Communication Bubbles and Wasted HBM Capacity.}
Copy-based prefetching trade-off between capacity and interconnect utilization. To overlap communication with computation, systems must allocate static HBM staging buffers. As shown in \autoref{fig:flexgen} (1), a naive double-buffered prefetcher will encounters communication bubbles: the interconnect idles while waiting for layer 1's computation to finish so its buffer can be reused for layer 3. While systems can mask these bubbles by increasing the prefetch depth (e.g., fetching $N$ layers ahead), doing so requires proportionally larger staging buffers, reducing HBM capacity for saving KV cache and weights.

\noindent\textbf{Computation Bubbles and Latency Trade-offs.} 
Layer-based coarse-grained prefetching often causes computation bubbles because executing a layer is much faster than fetching its weights (\autoref{fig:flexgen} (1)). To mask these stalls, systems like FlexGen process multiple micro-batches sequentially on the same layer to maximize weight reuse. However, this throughput optimization severely penalizes latency; forcing micro-batches to synchronize at each layer dramatically increases both time-to-first-token and per-token latencies.

% (paging, swapping, prefetching (ETC (ASPLOS 2019), FLEXGEN, TSPLIT (ICDE 2022))) 

\subsection{Memory Offload with Direct Access}
To achieve optimal performance, we need for a \textit{direct remote-memory-to-SM} data path (\autoref{fig:directoffload}). Direct access streams data directly from remote memory into the Streaming Multiprocessor's shared memory (SMEM), bypassing GPU HBM entirely. Traditionally, direct access required GPU threads to explicitly issue load/store instructions over the interconnect, consuming valuable register space and instruction cycles. We overcome this challenge by using the Tensor Memory Accelerator (TMA), a hardware engine introduced after the NVIDIA Hopper architecture. While originally designed for asynchronous transfers between local HBM and SMEM, we novelly prove (\autoref{sec:eval}) that TMA can execute asynchronous, direct accesses to remote memory with high-performance.

As shown in \autoref{fig:flexgen}, this direct-access scheme overcomes the limitations of copy-based methods through three critical benefits: (1) By streaming remote data directly into SMEM, SMs concurrently access host memory and local HBM to achieve the theoretical peak aggregated bandwidth.
(2) Eliminating static HBM staging buffers preserves GPU memory for larger models or KV caches.
(3) Instead of coarse-grained prefetching, direct access could use asynchronous TMA copies to seamlessly overlap fine-grained kernel computation with remote memory fetches, removing bubbles.

% As shown in \autoref{fig:flexgen}(2), direct access provides three advantages over copy-based methods:
% (1) Bandwidth Aggregation: By streaming remote data directly into SMEM, SMs concurrently access host memory and local HBM to achieve the theoretical peak of \textit{HBM + Interconnect Bandwidth}.
% (2) Capacity Savings: Eliminating static HBM staging buffers preserves GPU memory for larger models or KV caches.
% (3) Latency Hiding: Fine-grained, asynchronous TMA copies seamlessly overlap remote fetches with computation, eliminating the pipeline bubbles inherent in coarse-grained prefetching.
% \end{denseitemize}

\begin{figure}[t!]
    \centering
    \includegraphics[width=0.99\linewidth]{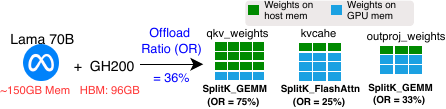}
    % \vspace{-0.2cm}
    \scaption{\name{} overview.}
     \vspace{-0.2cm}
    \label{fig:overview}
\end{figure}

\section{Overview}
\label{sec:overview}
We build \name{} to achieve highly efficient, direct-access memory offloading. The high-level idea of \name{} is simple: \textit{to fully utilize both local and remote memory, the GPU's SM should concurrently direct access HBM and host DRAM. This concurrent access is achieved through fine-grained weight partitioning and optimized GPU kernel running on partitioned memory.}

\autoref{fig:overview} shows a high-level overview of \name{}. When a user attempts to run a model on a given architecture, memory offloading is required if the available GPU HBM is smaller than the total memory footprint of the model's weights and KV cache. For example, running a 140 GB model like LLaMA-70B with batch size 32 and sequence length 1024 on a Grace Hopper (GH200) node with 96 GB of HBM requires a global memory offload ratio of 40\%. Given this global offload ratio, \name{} will calculate the optimal offloading ratio for each individual operation inside the inference pipeline\footnote{In this paper, we refer to operations involving KV cache matrices as \emph{attention} (\eg attention scoring and value aggregation in the attention layer), and operations involving model weights as \emph{linear} (\eg the q/k/v/o projections and the gate/up/down projection in MLPs). When we refer to partitioning an operation, we mean partitioning the input matrices of that operation.}. Based on these calculated ratios, each operation is partitioned and stored across both the GPU and host memory tiers.
After partitioning the operations, \name{} invokes optimized, direct-access kernels (\eg SplitK\_GEMM, SplitK\_FlashAttn) to execute over the split matrices. These kernels concurrently access local and remote memory. \name{} implements a warp-specialized producer-consumer architecture utilizing TMA accelerator for each kernel. This design decouples local/remote memory fetching from matrix math, allowing the GPU SMs to overlap computation with memory latency.

% \name{} overcomes the limitations discussed in \autoref{subsec:limitation}: 1.  GPU kernels achieve higher memory bandwidth for memory-bound operations (\eg decoding) by concurrently utilize both remote interconnect and local HBM bandwidth,. 2. Offloaded data is consumed directly via the SM, and no data needs to be staged in precious HBM. 3. Offloading and communication/computation overlapping are now performed at the sub-kernel (tile) level rather than the layer level.

% \subsection{Design Challenges and Ideas}
Realizing this direct-access architecture introduces many system-level challenges:
% Below, we describe these challenges and outline how we addresses them:

% \jxlin{COnsider C0, how to enable a kernel run with direct access on hetergenous memory}
\noindent\textbf{C1: Optimal Per-Operation Offloading Ratio.} How should the system determine the size of data offloaded to host memory for each individual operation, given a global offloading constraint? Different operations exhibit distinct bottlenecks (compute-bound vs. memory-bound), meaning each operation's performance has a different sensitivity to memory offloading. In \ref{subsec:offloading_algo}, we formulate this as an optimization problem and design an efficient greedy algorithm that determines per-operation offloading ratios with provable optimality.

\noindent\textbf{C2: Managing TMA-based Remote Memory Access.} In \name{}, GPU SMs concurrently access local and remote memory using the TMA. However, we observe that uncoordinated concurrent accesses can interfere with one another, causing severe intra-GPU interconnect congestion and limiting aggregate bandwidth. Furthermore, CPU memory is non-cacheable by the GPU's L2 cache (even under the recent cache-coherent Grace Hopper Architecture), requiring the kernel to handle disparate data paths and memory characteristics. In \ref{subsec:congestion_control}, we present a congestion control and TMA multicast mechanism that mitigates intra-GPU congestion and handles non-cacheable host memory access.

% \noindent\textbf{C3: Cross-Architecture Portability.} \name{} must seamlessly adapt to different GPU architectures, varying interconnect bandwidths (\textit{e.g.}, NVLink-C2C vs. PCIe), and diverse model sizes. In \autoref{sec:auto_tunner}, we introduce a cross-architecture auto-tuner that performs the aforementioned optimizations by profiling and adapting to the specific hardware characteristics of the deployment environment.

\section{Design}
\label{sec:design}

% \autoref{subsec:splitkernel} introduces how SplitKernel executes matrix computations across partitioned weights. \autoref{subsec:offloading_algo} explains how \name{} calculates optimal per-kernel offloading ratios. \autoref{subsec:congestion_control} introduces how to optimize SM-to-host memory access with congestion control and TMA-multicast. \autoref{subsec:kernel_align} discusses additional kernel optimizations and cross-architecture portability.

% \begin{itemize}
%     \item Offloading algorithm 
%     \begin{itemize}
%         \item Operators are heterogeneous in terms of performance bottleneck.
        
%         Some operators could be memory-bound and some could be compute-bound. In addition, even for computation-bound kernels, some could be more computation-bound than others. 

%         Arithmetic intensity is the factor that determines the performance bottleneck, which we use to characterize different kernels. 

%         \item Since operators are heterogeneous, applying a uniform offloading is sub-optimal. 

%     \end{itemize}
% \end{itemize}

\subsection{\name{} Design}
\label{subsec:splitkernel}

LLM inference is dominated by matrix multiplication operations, primarily involving the multiplication of hidden states by either model weights or KV cache matrices, in terms of both latency and memory footprint. \name{} partitions these operations across both GPU and host memory. Next, we introduce this partitioning scheme and the specialized kernels that compute directly over these partitioned matrices.

% LLM inference is dominated by matrix multiplication operations between hidden state tensors and model weights or KV cache tensors\footnote{In this paper, we refer to operations involving model weights as \emph{linear operations} (\eg q/k/v/o projections in the attention layer and gate/up/down projections in MLPs), and operations involving KV cache tensors as \emph{attention operations} (\eg attention score and value aggregation in the attention layer).}. Because these matrix multiplications dictate execution time, and the combination of model weights and KV cache largely determines the memory footprint, \name{} partitions these matrices across GPU and host memory. We develop specialized kernels that compute directly over these partitioned matrices, allowing the system to leverage both higher aggregate bandwidth and extended memory capacity.

% \begin{figure}[t!]
%     \centering
%     \includegraphics[width=0.99\linewidth]{figures/gemv-example.png}
%     \vspace{-0.2cm}
%     \scaption{An example of how \name{} partitions and executes a GEMV operation.}
%      \vspace{-0.2cm}
%     \label{fig:gemv_example}
% \end{figure}

\begin{figure}[t!]
    \centering
    \begin{subfigure}[t]{0.48\textwidth}
        \centering
        \includegraphics[clip,width=\textwidth]{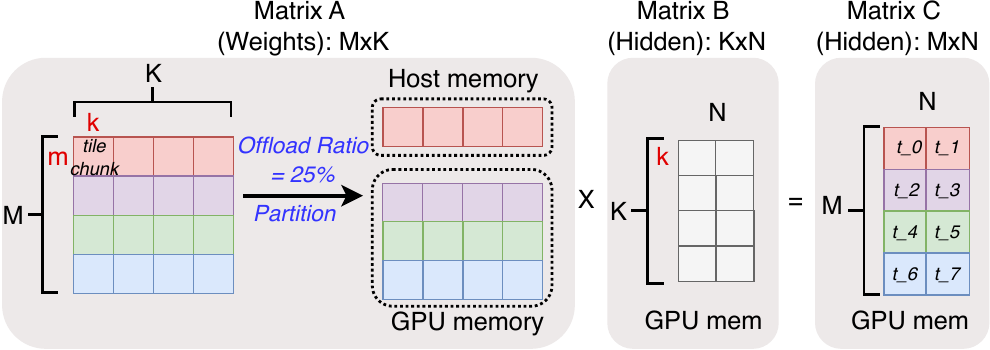}
        \scaption{Matrices partitioned between host and GPU memory.}
        \label{subfig:weights}
    \end{subfigure}%

    \begin{subfigure}[t]{0.48\textwidth}
        \centering
        \includegraphics[clip,width=\textwidth]{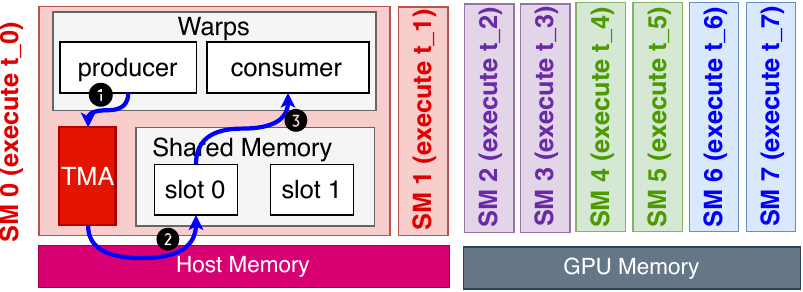}
        \scaption{Kernel on partitioned memories; tile rows executed by different SMs.}
        \label{subfig:GEMV}
    \end{subfigure}%
~
    \vspace{-0.2cm}
    \scaption{\name{} partitions and executes a matrix multiplication.}
    \vspace{-0.2cm}
    \label{fig:gemv_example} 
\end{figure}

\noindent\textbf{Data Partition:}
Figure~\ref{fig:gemv_example} illustrates how \name{} partitions a matrix multiplication operation, defined as $C = A \times B$, where $C$ and $B$ represent hidden states, and $A$ represents the model weights or KV cache. As shown in \autoref{subfig:weights}, $A$ is partitioned along the $M$ dimension into multiple tile rows, each of shape $m \times K$. 
These tile rows are then distributed across host memory and GPU memory. In the example shown, tile row $0$ resides in host memory, while tile rows $1$ through $3$ reside in GPU memory. The optimal number of tiles to be placed in host memory is determined by the offloading algorithm (detailed in \autoref{subsec:offloading_algo}).

\noindent\textbf{Kernel over Partitioned Memory:}
\name{} performs matrix computation directly over this partitioned memory. \autoref{subfig:GEMV} shows this kernel design. Each tile chunk in the output matrix is calculated by a different GPU SM. If the total number of tiles exceeds the available SM count, each SM processes multiple tile chunks. In this example, SMs 0 and 1 handle the computation for output tiles $t_0$ and $t_1$ respectively by reading weights from CPU host memory, while SMs 2 through 7 compute output tiles $t_2$ through $t_7$ by reading weights from local GPU memory. This design strictly isolates the data path, guaranteeing that each individual SM reads exclusively from either host or GPU memory.

The number SMs assigned to handle host versus GPU tiles is determined by three factors: (1) the \textbf{target offloading ratio}, which dictates the total size of data that must be processed from each memory tier; (2) \textbf{execution wave alignment}, which ensures the number of assigned tiles evenly divides across the allocated SMs to prevent tail latencies caused by partial, unbalanced execution waves; and (3) \textbf{interconnect congestion control}, which caps the number of host-assigned SMs when necessary to avoid memory contention (detailed in \autoref{subsec:congestion_control}).
% \jxlin{Let's make the kernel alignment brief, as we don't have space. Move it here.PTAL}

\noindent\textbf{Asynchronous TMA Access:}
The primary challenge of this approach is efficiently hiding the data load latencies that arise when SMs fetch matrices from HBM or remote host memory. To overlap data movement with computation, \name{} adopts a warp-specialized producer-consumer design within each SM, leveraging the Tensor Memory Accelerator (TMA). Since TMA load and store execute asynchronously, data movement happens in the background while Tensor Cores and CUDA cores perform mathematical operations.

As shown in \autoref{subfig:GEMV}, thread blocks in each SM utilize this warp specialization to mask memory latencies. The execution pipeline proceeds as follows: First, the producer warp leverages the TMA to asynchronously fetch a tile chunk of $A$ from host memory and its corresponding chunk of $B$ from GPU memory into a shared memory (SMEM) slot (Steps \textcircled{1} and \textcircled{2}). Once the slot is filled, the consumer warp performs the matrix multiplication (Step \textcircled{3}). Concurrently, the producer issues the next TMA requests to fetch the subsequent chunks into an alternate SMEM slot. The maximum number of in-flight TMA requests is determined by the available slot number and the congestion control algorithm introduced in \autoref{subsec:congestion_control}. By maintaining multiple in-flight TMA requests, the kernel effectively overlaps memory loads with computation.

With this design, \name{} achieves two critical benefits. First, it enables true parallel execution across memory tiers; by having distinct SMs read from host and GPU memory in parallel, the system fully achieve the aggregate bandwidth of both the local HBM and the host-GPU interconnect. Second, it efficiently absorbs the severe latency penalties of remote data loads through its asynchronous, warp-specialized nature.

\subsection{ Offloading Algorithm}
\label{subsec:offloading_algo}

This section introduces our offloading algorithm, which optimally distributes a global offloading budget across operations in LLM inference pipeline. We first demonstrate that applying a uniform offloading ratio across operations is suboptimal. To address this, we design a greedy offloading algorithm with \textit{provable optimality}.

\subsubsection{The Suboptimality of Uniform Offloading}
Given a global offloading ratio $OR$ (determined by total memory footprint and available GPU memory), the system must distribute this budget across individual operations. A naive approach is to apply a \textit{uniform offloading ratio}, where every operation offloads exactly $OR$ fraction of its involving matrices to the host. But this strategy is suboptimal because inference consists of a mixture of memory-bound and compute-bound operations, with different sensitivities to offloading.

\noindent\textbf{Diverse Performance Bottlenecks:}
An operation is memory-bound if its arithmetic intensity (AI) falls below the hardware's peak machine balance, and compute-bound otherwise. 
In LLM inference, the performance bottleneck—whether memory- or compute-bound—varies across operations and depends on workload characteristics such as batch size ($B$) and sequence length ($L$).  Both prefilling and decoding consist of a mixture of memory-bound and compute-bound operations:

\begin{denseitemize} 
\item \textbf{Decoding:} The attention operation is strictly memory-bound, as both its computation and memory traffic scale as $O(B \cdot L \cdot D_h)$ ( D\_h is the head dimension), yielding a constant Arithmetic Intensity (AI) of $O(1)$. In contrast, linear operations can be either memory- or compute-bound; it transitions from memory to compute-bound as $B$ grows.
    
\item \textbf{Prefilling:} The attention operation can be either memory- or compute-bound: its computation scales quadratically with sequence length ($O(B \cdot L^2 \cdot D_h)$) while memory scales linearly ($O(B \cdot L \cdot D_h)$), yielding an AI of $O(L)$ that shifts to compute-bound for long sequences. Linear operations achieve an AI of $O(B \cdot L)$ and remain inherently compute-bound since $B \cdot L$ is large in typical prefilling workloads.

\end{denseitemize}

\begin{figure}[t!]
    \centering
    \includegraphics[width=0.8\linewidth]{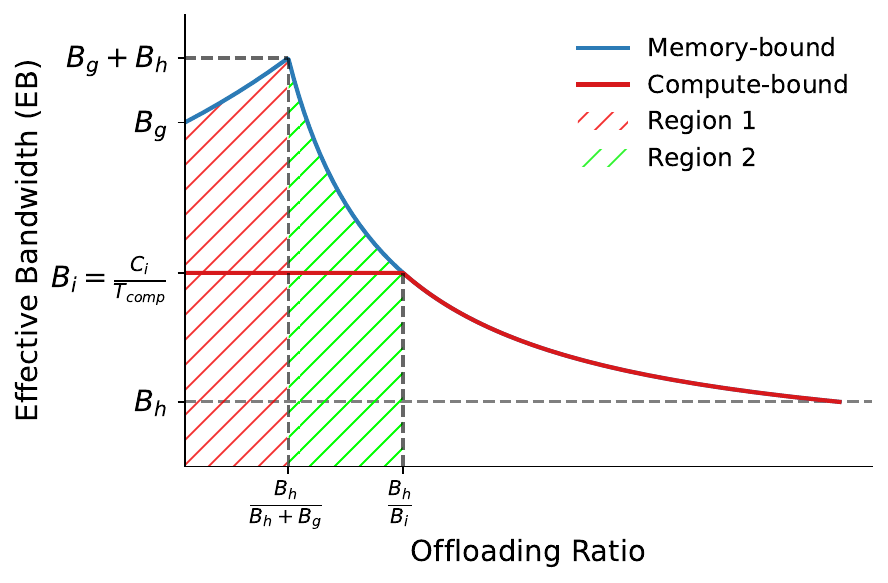}
    \vspace{-0.2cm}
    \scaption{Effective bandwidth of memory/compute-bound operations. }
     \vspace{-0.4cm}
    \label{fig:effective_bandwidth}
\end{figure}

% \shouxu{Should we clarify that after $\frac{B_h}{B_i}$, it is interconnect bandwidth bound?}
\noindent\textbf{Different Sensitivities to Offloading:}
Memory/compute-bound operations react differently to the offloading ratio.
To quantify this, we use the Effective Bandwidth ($\mathcal{EB}$) as the performance metric, which is defined as the total fetched data from memory ($C$)~\footnote{$C$ is the size of model weights or KV cache for linear or attention.} divided by the overall operation execution latency: $\mathcal{EB} = \frac{C}{\max(T_{\text{comp}}, T_{\text{mem}})}$, where $T_{\text{comp}}$ is the computation time and $T_{\text{mem}}$ is the memory access time. \autoref{fig:effective_bandwidth} shows how $\mathcal{EB}$ of both compute- and memory-bound kernels changes as the offloading ratio increases. We use $\mathcal{EB}$ as the primary performance metric because it provides a unified standard to easily identify whether an operation is memory- or compute-bound.
% \jxlin{Should we add once setence explain why use EB as the metric?} \shouxu{Can we say something like using this metric, we can easily identify an operation is memory or compute bound?}

% Memory- and compute-bound operations exhibit distinctly different sensitivities to the memory offloading ratio. To quantify this behavior, we evaluate the Effective Bandwidth ($\mathcal{EB}$). We adopt $\mathcal{EB}$ as our primary performance metric because it provides a unified standard to easily identify the underlying bottleneck of an operation: an $\mathcal{EB}$ approaching the theoretical hardware memory bandwidth limit clearly indicates a memory-bound kernel, whereas a significantly lower $\mathcal{EB}$ reveals a compute-bound kernel. Formally, $\mathcal{EB}$ is defined as the total data fetched from memory ($C$)~\footnote{$C$ represents the size of the model weights or KV cache for linear or attention operations, respectively.} divided by the overall operation execution latency: $\mathcal{EB} = \frac{C}{\max(T_{\text{comp}}, T_{\text{mem}})}$, where $T_{\text{comp}}$ is the computation time and $T_{\text{mem}}$ is the memory access time. \autoref{fig:effective_bandwidth} illustrates how the $\mathcal{EB}$ of both compute- and memory-bound kernels changes as the offloading ratio increases:

\begin{denseenum}
    \item \textbf{Memory-bound operations ($T_{\text{comp}} < T_{\text{mem}}$):} The operation performance is bounded by $T_{\text{mem}}$. Because \name{} concurrently reads from both host and GPU memory, the overall memory latency is bounded by the slower of the two transfers: $T_{\text{mem}} = \max(T_h, T_g)$, where $T_h$ and $T_g$ are the host and GPU data transfer times, respectively. As shown in the figure, $\mathcal{EB}$ initially increases with the offloading ratio, reaches a peak, and then decreases. This peak EB occurs at an optimal ratio of $\frac{B_h}{B_h+B_g}$, where $B_h$ and $B_g$ denote the host interconnect bandwidth and GPU memory bandwidth. At this threshold, the data transfer times are perfectly balanced ($T_h = T_g$). Consequently, the operation fully utilizes both memory bandwidths simultaneously. However, if the offloading ratio is beyond this point, reading from the slower host interconnect ($T_h$) becomes the bottleneck, and performance degrades.

\item \textbf{Compute-bound operations ($T_{\text{comp}} > T_{\text{mem}}$):}
For these operations, $\mathcal{EB}$ remains flat initially, since at lower offloading ratios, the operation latency is strictly dominated by the mathematical computation time ($T_{\text{comp}}$) rather than memory time. Because the computation effectively hides the memory transfer latency, offloading weights to the host does not impact the overall EB. However, this flat trend only holds up to a critical threshold (which we refer to as the ``threshold'' of compute-bound operations). Beyond this point, data movement from host memory becomes the bottleneck (\ie when $T_{\text{mem}} = T_h \ge T_{\text{comp}}$). Once this threshold is crossed, the operation transitions into a memory-bound regime dictated by the host interconnect, and $\mathcal{EB}$ begins to decrease following the same behavior as the memory-bound case.
\end{denseenum}

\noindent\textbf{Suboptimality of Uniform Offloading:}A uniform method that assigns the same ratio $r$ to all operations will be suboptimal. As shown in \autoref{fig:effective_bandwidth}, the suboptimal show up in two regions.
\textbf{Region 1 ($r \le \frac{B_h}{B_h + B_g}$):} the memory-bound operation is starved and actively benefits from increasing its offloading ratio toward its optimal peak, thus the offloading ratio for compute-bound operations should be allocated to memory-bound operations.
\textbf{Region 2 ($\frac{B_h}{B_h + B_g} < r < \frac{B_h}{B_i}$):} memory-bound operations suffer from effective bandwidth loss due to offloading. While compute-bound operations are still insensitive, the offloading budget should be shifted from memory-bound operations to compute-bound operations.

% Consider two operations with identical sizes: one is memory-bound and the other is compute-bound. Our goal is to assign offloading ratios to these two operations such that the overall offloading ratio is $R$.

% If $R \le \frac{B_h}{B_h + B_g}$, a uniform assignment is suboptimal. In this regime, the memory-bound operation benefits from increasing its offloading ratio toward the optimal point $x^* = \frac{B_h}{B_h + B_g}$, where its effective bandwidth is maximized. In contrast, the compute-bound operation is insensitive to offloading as long as it remains compute-bound. Therefore, we can reduce the offloading ratio of the compute-bound operation and reallocate that budget to the memory-bound operation, strictly improving overall performance.

% If $\frac{B_h}{B_h + B_g} < R < \frac{B_h}{B_i}$, a uniform assignment remains suboptimal. In this case, the memory-bound operation suffers from excessive offloading, which reduces its effective bandwidth. By decreasing its offloading ratio toward $x^*$, we can improve its performance. The freed offloading budget can be reassigned to the compute-bound operation, which can absorb additional offloading without performance degradation until it becomes memory-bound. This reallocation again leads to a strictly better solution than the uniform assignment.

\subsubsection{Optimal Greedy Offload}
Because varying operations yield different performance curves, we formulate this ratio offloading allocation as an optimization problem.

\noindent\textbf{Problem Formulation:}
Say the global offload ratio is $OR$. Given a set of operations $\mathcal{F}$, we assign an offloading ratio $x_i \in [0,1]$ to each operation $F_i$. The sum of offloaded data from all operations should match the global ratio constraint.

Our objective is to minimize the end-to-end execution time, which is the sum of individual operation latencies. Using the effective bandwidth model, the latency of operation $F_i$ can be written as $\frac{C_i}{\mathcal{EB}(x_i)}$ under offloading ratio $x_i$.

% where $R \in [0,1]$ is the overall offloading ratio.

\noindent\textbf{Greedy Offload Algorithm:}
We prove that the optimal allocation strictly follows a multi-stage greedy algorithm based on each operator's turning point shown in \autoref{fig:effective_bandwidth}. We allocate the offloading ratio to operations in three distinct phases:

\begin{denseenum}
    \item \textbf{Allocate to Memory-Bound Operations.} It first allocates the offloading budget all to memory-bound operations, since it directly improves $\mathcal{EB}$. The exact distribution of the budget between the memory-bound operations does not matter if no memory operation's ratio is beyond its turning point.
    \item \textbf{Saturate Compute-Bound Operations.} Once all memory-bound operations reach their peak $\mathcal{EB}$, the remaining offloading budget should be distributed to the compute-bound operations. Similar to the first phase, the exact distribution among compute-bound operations does not affect optimality, as long as no compute-bound operation's ratio is beyond its turning point.
    \item \textbf{Arbitrary Allocation.} If the global offloading ratio is large enough that all operations (both memory- and compute-bound) have reached their turning points, the remaining budget can be assigned arbitrarily across any operations without affecting the end-to-end latency.
\end{denseenum}

% \noindent\textbf{Proof of Optimality:}
% Here, we sketch the high-level intuition that proofs the greedy algroithm is optimal, and the detailed formal proof is provided in \autoref{sec:appendix_greedy_proof}. We prove via contradiction that assigning any budget to compute-bound operations before all memory-bound operations reach their turning points results in strictly sub-optimal end-to-end execution time. Once all optimal turning points are met, the latency objective function reduces to a constant under the budget constraint, rendering the exact intra-class distribution of the budget irrelevant to overall optimality. \shouxu{can be shortened for space?}

\noindent\textbf{Proof of Optimality:} We prove via contradiction (\autoref{sec:appendix_greedy_proof}) that our greedy algorithm is optimal.

\begin{figure}[t!]
    \centering
    \begin{subfigure}[t]{0.245\textwidth}
        \centering
        \includegraphics[clip,width=\textwidth]{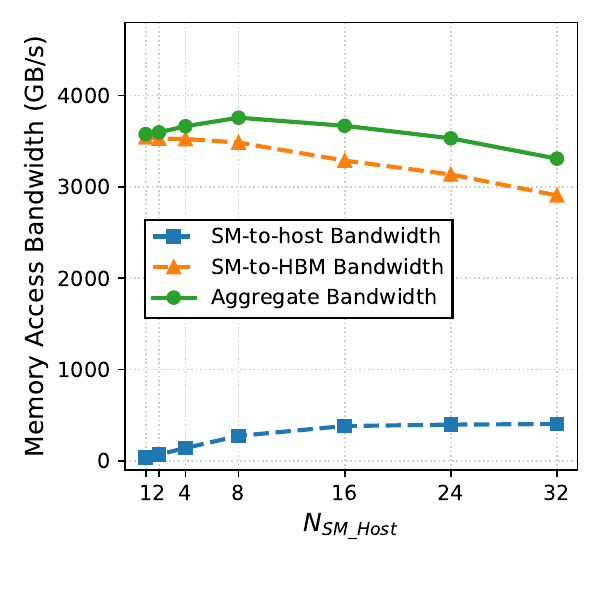}
% 353949
\vspace{-0.3cm}
\scaption{ $N\_{inflight}=3$ , $N_{SM\_HBM}=100$ and we vary $N_{SM\_Host}$.}
        \label{subfig:congestion_SM}
    \end{subfigure}%
~
    \begin{subfigure}[t]{0.245\textwidth}
        \centering
        \includegraphics[clip,width=\textwidth]{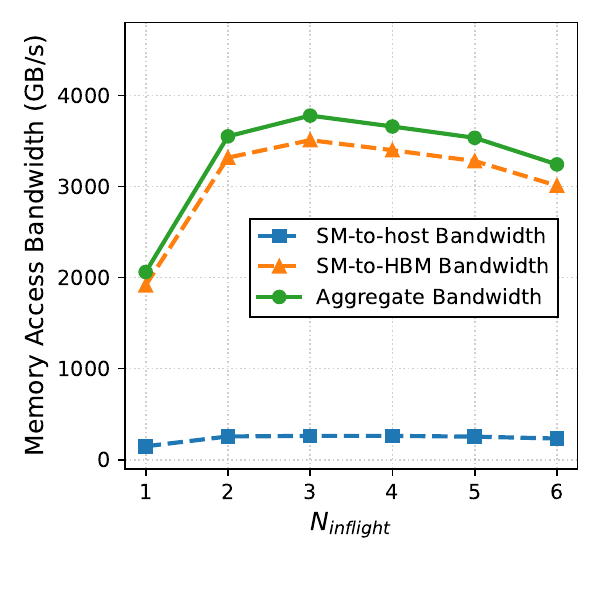}
        \vspace{-0.3cm}
        \scaption{$N_{SM\_HOST}=8$, $N_{SM\_HBM}=100$ and we vary $N\_{inflight}$.}
        \label{subfig:congestion_inflight}
    \end{subfigure}%
~
    \vspace{-0.3cm}
    \scaption{Local/remote memory access congestion on GH200. Increased $N_{SM\_Host}$ or $N_{inflight}$ reduces SM-to-HBM bandwidth.}
    \vspace{-0.2cm}
    \label{fig:congestion} 
\end{figure}

\subsection{Efficient Host Memory Access with TMA}
\label{subsec:congestion_control}

In \name{}, SMs utilize the TMA to concurrently access both host and GPU memory. To our knowledge, \textit{\name{} is the first system to leverage TMA for host memory accesses}. However, relying on TMA for cross-interconnect data movement requires careful orchestration to achieve optimal performance and avoid severe resource contention.

\subsubsection{Congestion Control over Local/Remote Access}
We observe that issuing an unconstrained number of inflight TMA requests to host memory triggers interconnect congestion, which degrades the GPU's local HBM performance. 

The total volume of SM-to-host inflight TMA memory requests can be estimated as $N_{SM\_host} \times N_{inflight}$, where $N_{SM\_host}$ is the number of SMs reading from host memory, and $N_{inflight}$ is the number of inflight TMA requests issued per SM. \autoref{fig:congestion} illustrates how unconstrained $N_{SM}$ or $N_{inflight}$ causes contention that degrades local HBM throughput on the Grace Hopper Superchip. 

In the first experiment (\autoref{subfig:congestion_SM}), we dedicate 100 SMs to read from local GPU HBM while varying $N_{SM}$ reading from remote host memory via NVLink-C2C. As shown in the figure, when $N_{SM}$ exceeds 8, the SM-to-HBM bandwidth drops significantly, dragging down the system's total aggregate bandwidth.
In the second experiment (\autoref{subfig:congestion_inflight}), we fixes the number of SMs reading from both host and GPU, but varies $N_{inflight}$ issued per SM. The results show that excessive in-flight requests from individual SMs to the host also cause a severe slowdown in aggregate bandwidth.

\noindent\textbf{Root Cause Analysis:} 
We observe that this bandwidth degradation does not occur when SMs only access HBM, indicating there is contention between the local and remote data paths. While proprietary GPU interconnect details are undisclosed, we hypothesize this stems from poor resource isolation within the memory hierarchy. Specifically, once the host-GPU interconnect saturates, excess in-flight remote requests cannot drain. These stalled requests accumulate and exhaust shared internal resources (such as L2 cache MSHRs or interconnect routing buffers) which starves the local SM-to-HBM traffic and severely degrades overall performance.

% latency asymmetry. Remote host memory accesses incur significantly higher latency than local HBM accesses. A massive volume of long-latency remote requests likely exhausts shared interconnect resources (e.g., network-on-chip credits). \jxlin{Actually, I think it might be a backpressure problem, since the congestion point starts when the host memory starts to saturate.} This latency differential creates an inherent unfairness: pending host requests disproportionately occupy the interconnect, starving the much faster local HBM accesses of necessary routing credits and stalling the entire memory hierarchy.

\noindent\textbf{Congestion Control Strategy:} 
\name{} applies congestion control to remote memory accesses by limiting the ($N_{inflight}$) and ($N_{SM}$). 
% Together, these mechanisms prevent remote memory traffic from overwhelming the intra-GPU interconnect and congesting the HBM data path.
To limit $N_{inflight}$, \name{} uses the concept of a \textit{congestion window} from traditional computer networking. Each SM fetching from the host is assigned a maximum allowance of in-flight requests that is sized to saturate the SM-to-host memory bandwidth without exceeding it. A given SM's TMA engine will strictly cap its asynchronous memory requests to this window. Unlike traditional networking protocols that dynamically adjust this window at runtime (which would incur significant runtime overhead for GPU kernels), \name{} assigns the congestion window statically. This optimal static window size is calculated via a lightweight parameter-sweeping profiler executed prior to kernel launch.

\name{} limits $N_{SM}$ by ensuring we do not over-assign SMs beyond what is necessary. The optimal host $N_{SM}$ is determined by two factors: 1) interconnect congestion, and 2) how many SMs are needed to process the offloaded data on the host. To find the optimal $N_{SM}$, \name{} performs an offline parameter sweep—fixing the required local GPU SMs while varying the host SMs—to identify the exact SM allocation to the host that maximizes end-to-end throughput. This ensures the system provisions exactly enough SMs to saturate the computation needs and avoid congestion.

\begin{table}[t!]
\centering
\footnotesize
\begin{tabular}{@{}rrr@{}}
\hline
\textbf{Size ($N$)} & \textbf{Host-GPU Traffic} & \textbf{Read Amplification} \\
\hline
256 & 102.76 MB & 1.05$\times$ \\
512 & 205.52 MB & 2.10$\times$ \\
1024 & 411.04 MB & 4.19$\times$ \\
2048 & 822.08 MB & 8.39$\times$ \\
4096 & 1.64 GB & 16.78$\times$ \\
\hline
\end{tabular}
\scaption{Host-GPU memory traffic across varying sizes ($N$). The size of the matrix offloaded to the host is 98 MB.}
\vspace{-0.4cm}
\label{tab:read_amplification}
\end{table}
\subsubsection{TMA Multicast for Uncacheable Host Access}
A critical challenge in direct-access offloading is that host-homed memory is uncacheable by the GPU, completely bypassing the L2 cache. Our experiment shows that this architectural limitation applies to both traditional PCIe-connected systems and even coherent CPU-GPU platforms~\footnote{Even on the Grace Hopper Superchip, while the NVLink-C2C supports coherent unified memory, direct host-homed memory accesses bypass the GPU's L2 cache to maintain hardware coherence without triggering a page migration.}. 

Consequently, direct host access suffers from severe \textit{read amplification}: if multiple SMs request the same remote tile, the identical data chunk must be transmitted across the host-GPU interconnect multiple times as it cannot cached by GPU L2. For instance, in the scenario depicted in \autoref{subfig:weights}, output tiles $t_0$ and $t_1$ both require tile row 1 of matrix $A$, that row is fetched twice over the long-latency interconnect by SM0 and SM1. As shown in \autoref{tab:read_amplification}, we measured the read amplification effect with different size $N$ in \autoref{subfig:weights} and the total host-to-GPU memory transfer scales linearly, reaching 16x more data transfer over CPU-GPU interconnect at $N = 4096$.

% \name{} leverages TMA Multicast capabilities to eliminate redundant interconnect traffic. Instead of each SM issuing independent remote fetches, a single warp within a Thread Block Cluster issues a TMA multicast instruction specifying a destination SM mask. The TMA engine fetches the host tile across the host-GPU interconnect exactly once. Upon arriving at the GPU's L2 cache subsystem, the hardware leverages the high-speed intra-cluster Network-on-Chip (NoC) to spatially broadcast the payload using Distributed Shared Memory (DSMEM). Physically, the NoC routes the single data payload and simultaneously writes it directly into the SRAM banks of the shared memory (SMEM) for all targeted SMs, bypassing the L1 caches. To ensure data consistency, the TMA engine signals a hardware-backed asynchronous transaction barrier (\texttt{mbarrier}) in each receiving SM, allowing all threads to safely synchronize and overlap active computation while the multicast completes.
To eliminate this interconnect bottleneck, \name{} uses TMA Multicast and host-aware tile scheduling:
\begin{denseitemize}
    \item \textbf{TMA Multicast:} \name{} leverages the advanced TMA Multicast capabilities~\cite{nvidia_cuda_programming_guide_tma}. Instead of each SM issuing independent remote fetches, TMA fetches the host tile across the host-GPU interconnect exactly once. Upon host data arriving at the GPU, the GPU leverages the high-speed intra-cluster Network-on-Chip (NoC) to spatially broadcast the data using Distributed Shared Memory (DSMEM). Physically, the NoC routes the single data payload and simultaneously writes it directly into the SRAM banks of the SMEM for all targeted SMs.
    \item \textbf{Host-Locality-First Tile Scheduling:} 
    Unlike traditional scheduling that optimizes for HBM locality or L2 cache hit rates, \name{} implements \textit{host-locality-first} SM scheduling to enable efficient multicast. Specifically, the producer warps of SMs accessing the same host tile chunk are grouped into the same thread block cluster, which are guaranteed to be co-scheduled simultaneously. Only the producer warp of one SM in the cluster initiates multicast, which reads the tile chunk from host memory once and then stores it to the SMEM slot of all SMs in the cluster.

    % TMA Multicast strictly requires the producer warps of SMs accessing the same host tile chunk to be scheduled simultaneously in the thread block cluster. To satisfy this, \name{} implements \textit{host-locality-first} tile scheduling. Unlike traditional tile schedulers that optimize for HBM locality or L2 cache hit rates, our tile scheduler guarantees that all output tiles requiring the same host-memory chunk are grouped into the same cluster to be co-scheduled simultaneously. This ensures that the TMA multicast mechanism can successfully serve all consumer warps using only a single host-to-GPU memory transaction.
\end{denseitemize}

\section{Implementation}
\label{subsec:impl}

We implemented \name{} in approximately 1,300 lines of Python and 2,700 lines of CUDA C++, leveraging CUTLASS to construct our direct-access kernels. The code is open-sourced at \url{https://github.com/shouxulin/DirectAccessKernel.git}. Our system is designed for seamless integration into existing deep learning ecosystems and easy portability, with four key aspects:

\noindent\textbf{PyTorch Interface:} We expose two custom PyTorch modules, \textit{SplitK\_GEMM} and \textit{SplitK\_FlashAttn}, which serve as drop-in replacements for the native \texttt{nn.Linear} and attention layers. This API design allows users to adopt \name{} by simply swapping the corresponding layers during initialization, without modifying the model architecture code.

\noindent\textbf{Supporting FlashAttention:} \textit{SplitK\_FlashAttn} follows the same interface as Pytorch's Scaled Dot-Product Attention (SDPA)~\cite{pytorch_docs_sdpa}, while extending it with support for partitioned KV cache access via TMA. It partitions the KV cache along the batch dimension—storing the cache for a subset of requests in local GPU memory and the remainder in remote host memory. Similar to our linear operation kernels, \textit{SplitK\_FlashAttn} supports placing the KV cache into either host memory or HBM based on a given offload ratio.

% \textit{SplitK\_FlashAttn} implements the Scaled Dot-Product Attention (SDPA) algorithm~\cite{pytorch_docs_sdpa}, extending it with support for asynchronous KV cache loading via the TMA. 

\noindent\textbf{CUDA Graph Optimization:} To eliminate CPU-side kernel launch overheads during the latency-sensitive decoding phase, \name{} fully supports CUDA Graphs. We pre-allocate the KV cache using PyTorch's \texttt{StaticCache}, to capture the execution graph using standard APIs, and seamlessly replay it across all subsequent decoding steps.

\noindent\textbf{Cross-Architecture Adaptation:}  To adapt to different GPU generations, we swap the underlying CUTE compute atoms to perform matrix multiplication for that specific architecture (\eg adapting to different tensor core generations). Furthermore, we expose tunable parameters for the TMA chunk size and maximum in-flight requests. This allows the system to easily calibrate to different SMEM size and interconnect bandwidth limits on different architectures.

\begin{figure*}[h]
\begin{subfigure}[b]{0.49\linewidth}
        \centering
    \includegraphics[width=0.99\linewidth]{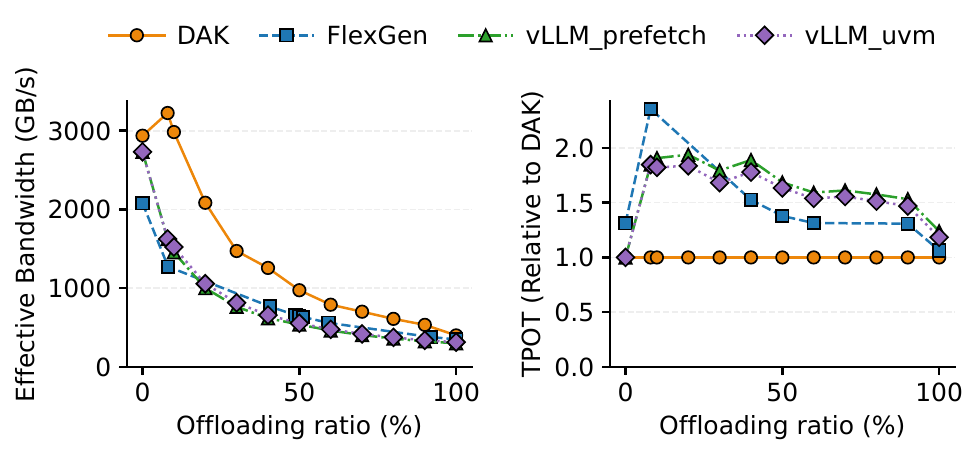}
    \scaption{OPT-30b on GH200}
    \label{subfig:of_1}
\end{subfigure}
~
\begin{subfigure}[b]{0.49\linewidth}
        \centering
    \includegraphics[width=0.9\linewidth]{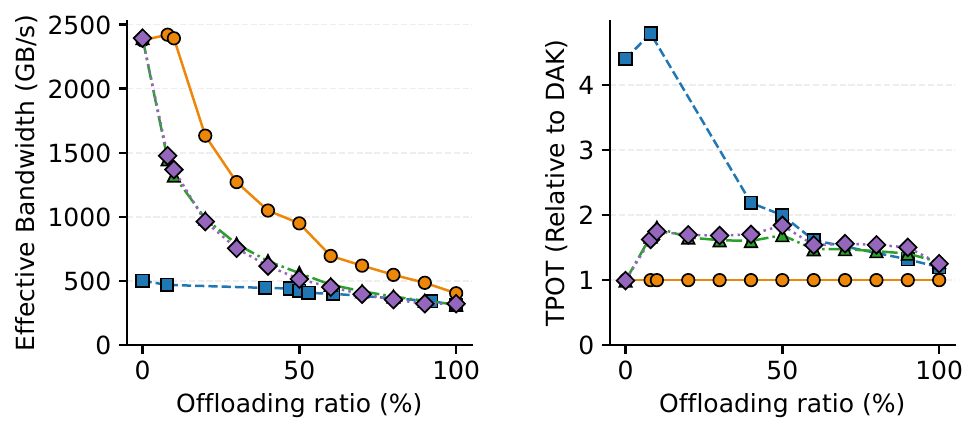}
    \scaption{OPT-6.7b on GH200}
    \label{subfig:of_2}
\end{subfigure}
\hfill

\begin{subfigure}[b]{0.49\linewidth}
        \centering
    % Scaled to match figure (b)
    \includegraphics[width=0.9\linewidth]{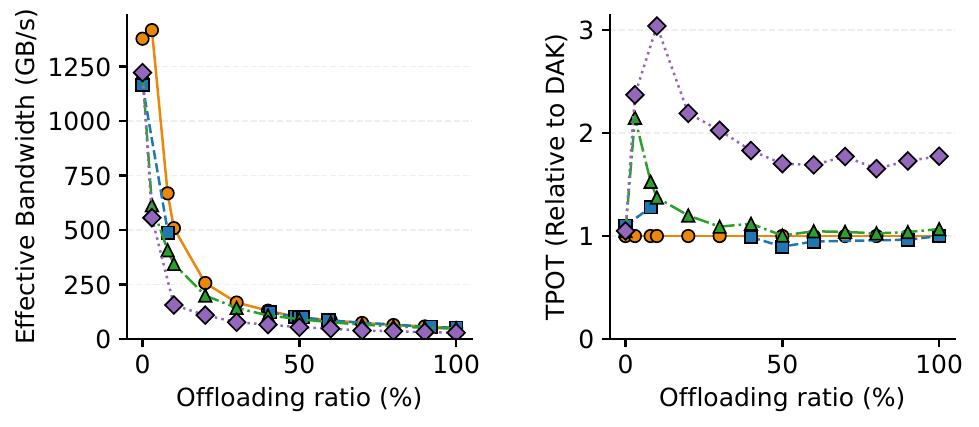}
    \scaption{OPT-30b on RTX6000 Pro Blackwell}
    \label{subfig:of_3}
\end{subfigure}
~
\begin{subfigure}[b]{0.49\linewidth}
        \centering
    % Scaled to match figure (b)
    \includegraphics[width=0.9\linewidth]{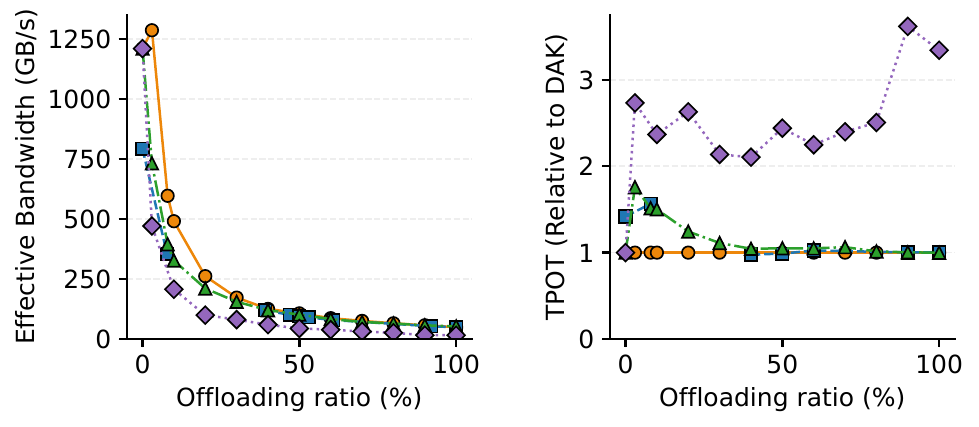}
    \scaption{OPT-6.7b and on RTX6000 Pro Blackwell}
    \label{subfig:of_4}
\end{subfigure}
\scaption{Performance w/ batch size = 8 under vary offloading ratios.}
\vspace{-0.4cm}
\label{fig:offload_ratio}
\end{figure*}
\section{Evaluation}
\label{sec:eval}
We evaluate \name{} by answering the following questions:
(1) How does \name{} compare to baselines across various offloading ratios, hardware architectures, and models?
(2) Can \name{} enable large models to run efficiently when their memory footprints exceed GPU capacity (\autoref{eval:e2e})?
(3) What is the performance gain of the greedy offloading algorithm?
(4) How effective are the TMA optimizations (\autoref{eval:opt})?

\noindent\textbf{Testbed \& Workloads:} 
We evaluate our system across two distinct NVIDIA GPU architectures to capture a range of interconnect technologies and GPU generations:
\begin{denseitemize}
\item \textbf{Grace Hopper (GH200):} An NVLink-C2C-based system featuring a 900 GB/s host-GPU interconnect and 480 GB of host memory. The GPU includes 96 GB of local HBM3 with 4.0 TB/s of bandwidth.
\item \textbf{RTX 6000 Pro Blackwell:} A PCIe Gen5-based system with a unidirectional host-GPU bandwidth of 64 GB/s and 512 GB of host memory. The GPU features 96 GB of local GDDR7 with 1.8 TB/s of bandwidth.
\end{denseitemize}
Our experiments are conducted under an offline, batched-inference setup. We evaluate a diverse set of LLMs across varying batch sizes and prompt lengths, decoding 32 tokens per request. Unless otherwise specified, the prompt length is set to 32. We test across multiple models (OPT-30b, OPT-6.7b, and Llama-2-7b). In our evaluation, we compare our system against three baselines: \textit{FlexGen}~\cite{sheng2023flexgen}, a state-of-the-art offloading framework that uses double-buffered, layer-by-layer prefetching to manage weights and KV caches across tiered memory; \textit{vLLM-prefetch}~\cite{dao2022flashattention}, a config of vLLM that implements asynchronous prefetch of KV blocks and weights from host memory to GPU HBM; and \textit{vLLM-uvm}, a vLLM variant that utilizes NVIDIA Unified Virtual Memory (UVM), which relies on hardware page faults and system managed paging migrations to access host memory. 

% \jxlin{@Shouxu, could you check and add more detail to this setup if necessary?}

\subsection{\name{} End-to-End Performance}
\label{eval:e2e}
In this experiment, we compare \name{}'s performance against baseline systems by sweeping the offloading ratio from 0\% to 100\% for the OPT-30B and OPT-6.7B models.
We evaluate using two primary metrics: (1) \textbf{Time Per Output Token (TPOT)}, representing the end-to-end decoding latency for a single token, and (2) \textbf{Effective Bandwidth ($\mathcal{EB}$)} defined as the total model size divided by TPOT, which shows the achieved aggregated memory bandwidth.

\noindent\textbf{Weights Offload:}
We first measure performance using a batch size of 8.
In this experiment, because the batch size is small, the memory footprint of the KV cache is also small, meaning the offloaded memory is primarily model weights. With a small batch size, the decoding is mostly memory-bound.

As shown in \autoref{fig:offload_ratio}, \name{} consistently achieves the highest $\mathcal{EB}$ across all models and offload ratios. On the GH200 system, \name{}'s direct-access approach provides massive advantages across all offload ratios, yielding 1.5$\times$ to 5$\times$ performance gains. \name{} also achieves optimal aggregate system memory bandwidth through offloading. For instance, with OPT-30B at a 10\% offload ratio, \name{} sustains 3,300 GB/s, achieving a near-optimal aggregate memory bandwidth that approaches the theoretical peak of NVLink-C2C + HBM. In contrast, prefetching baselines suffer bandwidth degradation across all offload ratios. \name{}'s high memory bandwidth also translates directly into a lower TPOT. As a note, in \autoref{subfig:of_2}, FlexGen exhibits low performance on OPT-6.7B because kernel launch overhead dominates for smaller models, whereas vLLM and \name{} utilize CUDA Graphs to mitigate this overhead entirely.
% \jxlin{@Shouxu, could you please help double check this claim?} \shouxu{Yeah, its correct.}

As shown in \autoref{subfig:of_3} and \autoref{subfig:of_4}, on the PCIe-based RTX 6000 Blackwell system, \name{} remains strictly better to all prefetching schemes. It achieves near-ideal aggregate system memory bandwidth, demonstrating a ~4\% improvement through offloading that closely approaches the theoretical limits of the PCIe Gen5 interconnect. At low-to-medium offload ratios (0\%--40\%), \name{} delivers a 1.3$\times$ to 3$\times$ improvement in both $\mathcal{EB}$ and TPOT compared to vLLM-prefetch and FlexGen. Beyond a 40\% offload ratio, the performance of \name{}, vLLM-prefetch, and FlexGen converges, as operation latency becomes entirely bottlenecked by the physical limits of transferring weights over the low-bandwidth PCIe link. The vLLM-UVM baseline performs significantly worse than all other systems due to the severe latency penalties of page-fault and paging migration overheads over PCIe.

% \begin{figure}[t!]
%     \centering
%     \includegraphics[width=0.9\linewidth]{figures/eval/GH200_opt-30b_prompt_len32_bsz512_all_metrics.pdf}
%     \vspace{-0.2cm}
%     \scaption{OPT-30b on GH200 with batch size(bsz)=512.}
%      \vspace{-0.2cm}
%     \label{fig:kvcacheoffload}
% \end{figure}

\begin{figure}[t!]
\begin{subfigure}[b]{0.48\linewidth}
        \centering
    \includegraphics[width=0.99\linewidth]{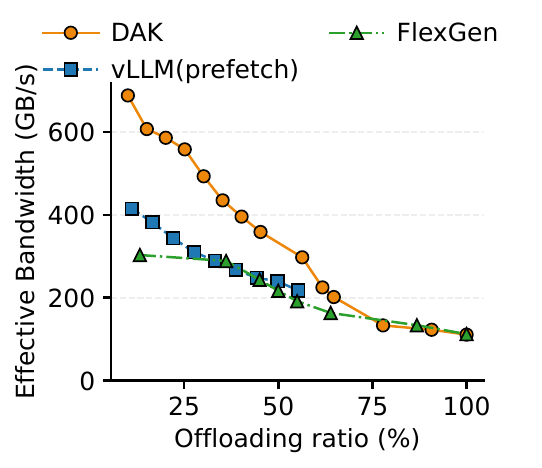}
    \scaption{OPT-30b on GH200}
    \label{subfig:of_1}
\end{subfigure}
~
\begin{subfigure}[b]{0.48\linewidth}
        \centering
    \includegraphics[width=0.99\linewidth]{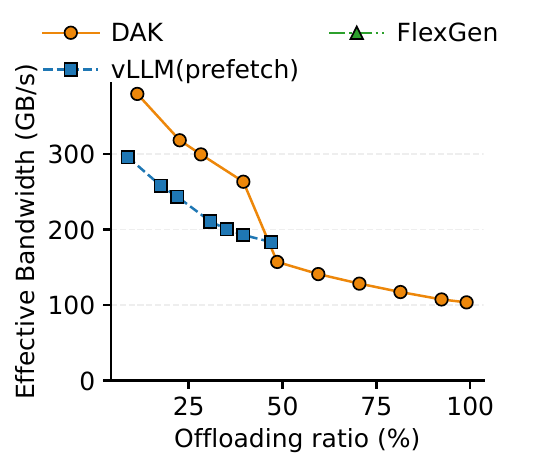}
    \scaption{Llama-2-7b on GH200}
    \label{subfig:of_2}
\end{subfigure}

\scaption{Performance w/ batch size = 512 under vary offloading ratios}
\vspace{-0.3cm}
\label{fig:kvcacheoffload}
\end{figure}

\noindent\textbf{KV Cache Offload:}
We next evaluate performance using a larger batch size of 512. At this scale, the KV cache footprint increases to 45 GB, requiring the system to concurrently manage the offloading of both weights and KV caches. Furthermore, the decoding phase becomes a heterogeneous mix of compute-bound operations (linear layers) and memory-bound operations (attention).
\autoref{fig:kvcacheoffload} shows that \name{} improves up to 2.1x performance compared to baselines on the GH200 system under OPT, and Llama model. \name{} achieves this performance for two key reasons: (1) its direct-access paradigm fully utilizes aggregate system bandwidth to accelerate memory-bound attention operations, (2) its offloading algorithm analytically determines the optimal, operation-specific offloading ratios for both weights and KV caches. In contrast, baseline systems apply a naive, uniform offloading strategy that ignores these distinct operation characteristics, and (3) \name{} can efficiently handle the read amplification problem with TMA-multicast under large batch sizes.
Note that vLLM data points are omitted for offload ratios exceeding 50\%, as this is the threshold where vLLM begins offloading the KV cache to CPU memory. In these cases, vLLM falls back to splitting the workload into multiple sub-batches and interleaving the prefill and decode phases, which prevents a clean, isolated measurement of the decode latency for the entire batch. Additionally, FlexGen results for Llama are missing because it is not supported by FlexGen.

\begin{figure}[t!]
    \centering
    \includegraphics[width=0.98\linewidth]{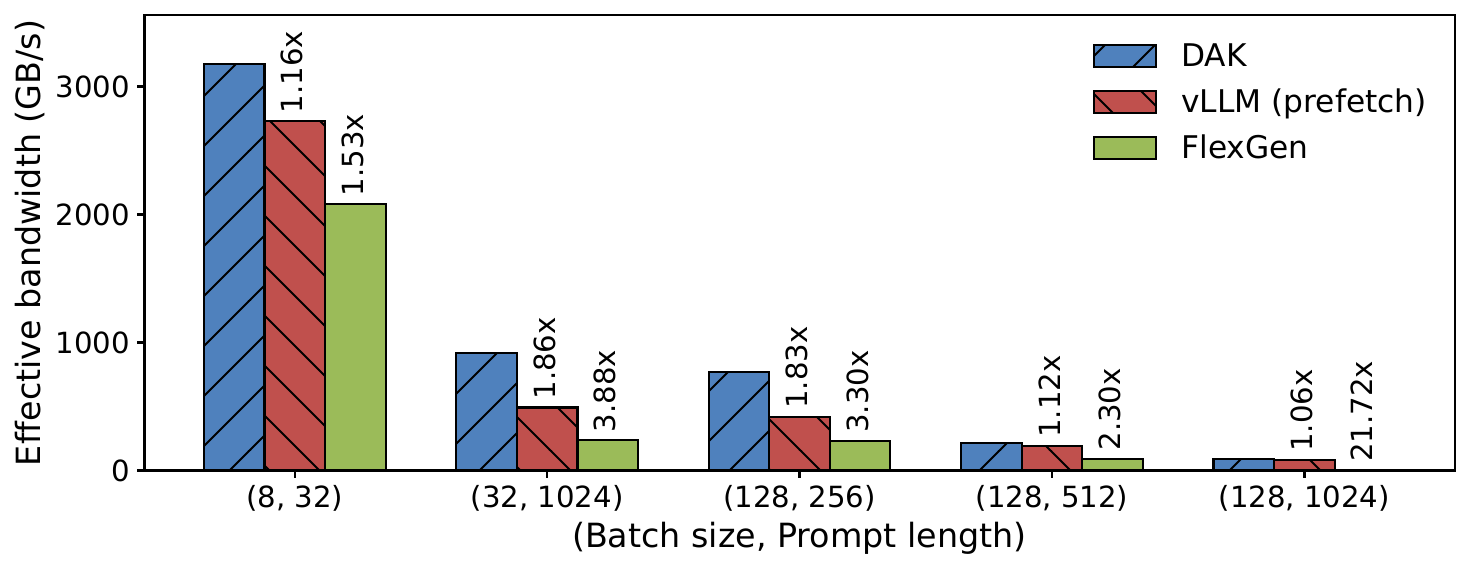}
    \footnotesize
     \begin{tabular}{ccccc}
    \hline
    bsz & prompt\_len & model  & KV cache  & global offload ratio \\ \hline
    8   & 32         & 55.6 GB      & 0.7    GB    & 0\%                \\ 
    32  & 1024       & 55.6 GB      & 46.51  GB    & 6\%                 \\ 
    128 & 256        & 55.6 GB      & 50.73  GB    & 10\%                 \\ 
    128 & 512        & 55.6 GB      & 95.83  GB    & 37\%                 \\ 
    128 & 1024       & 55.6 GB      & 186.03 GB    & 60\%                 \\ \hline
    \end{tabular}
    
    \scaption{Figure shows the performance for GH200 under different configurations. The table shows the memory footprint for different configurations and the corresponding global memory offload ratio.}
     \vspace{-0.2cm}
    \label{fig:bw-config}
\end{figure}

% \zy{not end to end, but more like "optimial", give na setup what's the best performance we can get }
\noindent\textbf{Optimal Model Offloading:}
Now, rather than manually sweeping offload ratios, we evaluate \name{}'s capability to efficiently run large models on GPUs with limited memory capacity, with the global offloading ratio determined by the real available GPU memory. As shown in \autoref{fig:bw-config}, we vary the batch size and prompt length for OPT-30B. Large batch sizes and extended prompt lengths increase the total memory footprint well beyond the available GPU HBM, which dynamically dictates the required global offload ratio.

We compare the maximum achievable performance across these configurations against vLLM and FlexGen. \name{} consistently outperforms all baselines across every configuration, achieving 1.06$\times$ to 1.83$\times$ speedups over vLLM, and 1.53$\times$ to 21$\times$ speedups over FlexGen. These significant gains highlight the combined effectiveness of \name{}'s optimal offloading algorithm and highly efficient direct-access. In \autoref{appendix:addi}, we show additional results on OPT-6.7B.

\begin{figure}[t]
\begin{subfigure}[b]{0.48\linewidth}
        \centering
    \includegraphics[width=0.99\linewidth]{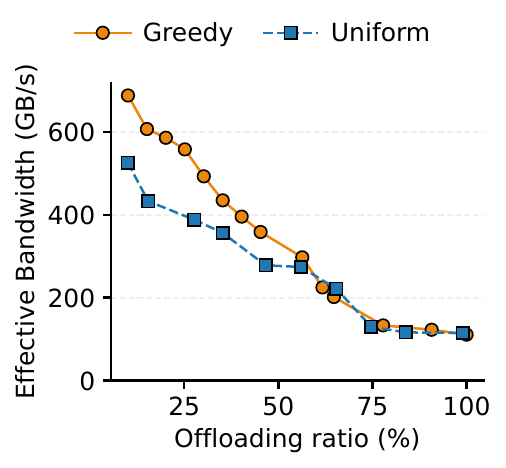}
    \scaption{OPT-30b on GH200}
    \label{subfig:of_1}
\end{subfigure}
~
\begin{subfigure}[b]{0.48\linewidth}
        \centering
    \includegraphics[width=0.99\linewidth]{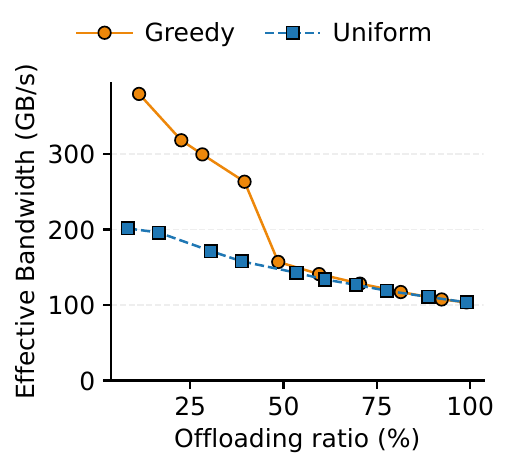}
    \scaption{Llama-2-7b on GH200}
    \label{subfig:of_2}
\end{subfigure}

\scaption{Compare greedy with uniform offloading. Batch size is 512.}
\vspace{-0.3cm}
\label{fig:greedy}
\end{figure}
\begin{figure}[t!]
    \centering
    \begin{subfigure}[t]{0.24\textwidth}
        \centering
        \includegraphics[clip,width=\textwidth]{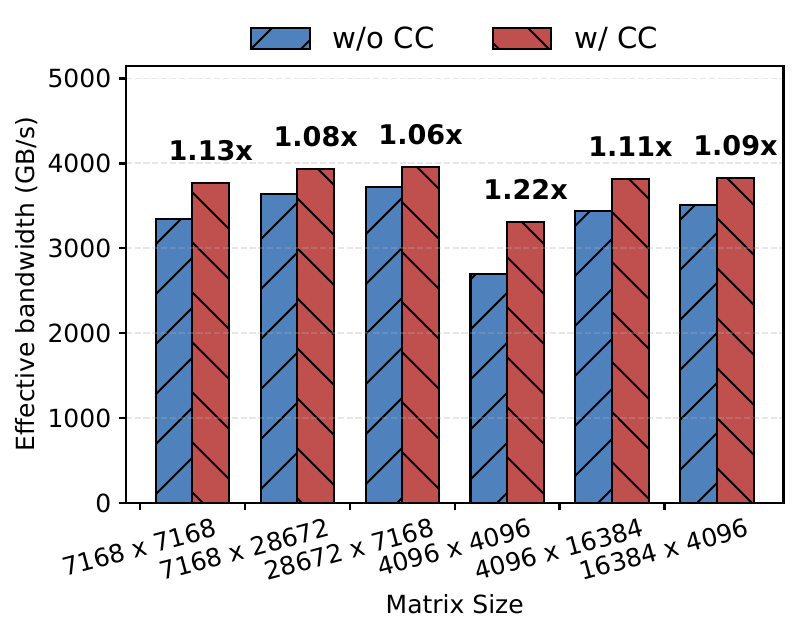}
% 353949
\scaption{Congestion Control Effect}
        \label{subfig:micro_cc}
    \end{subfigure}%
~
    \begin{subfigure}[t]{0.24\textwidth}
        \centering
        \includegraphics[clip,width=\textwidth]{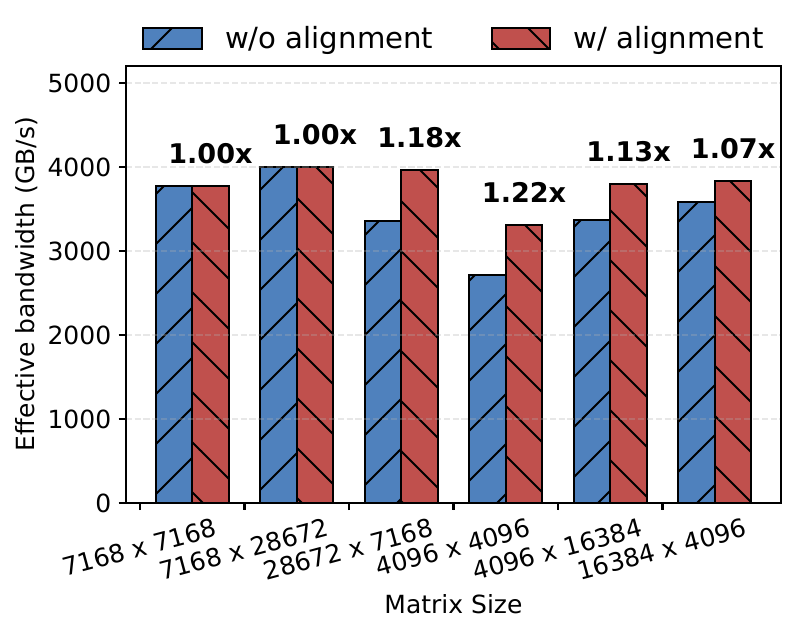}
        \scaption{Kernel Alignment Effect}
        \label{subfig:micro_alignment}
    \end{subfigure}%
~
    \vspace{-0.2cm}
    \scaption{Effectiveness of congestion control and kernel alignment on different matrix sizes.}
    \vspace{-0.3cm}
    \label{fig:micro} 
\end{figure}

\subsection{Effectiveness of \name{} Optimizations}
\label{eval:opt}
Next, we run benchmarks on the GH200 system and evaluate \name{}'s key design decisions.

\noindent\textbf{Greedy Offload Performance:}
We show that our greedy offload algorithm successfully incorporates the compute- and memory-bound characteristics of individual operations to select the optimal offloading ratio for each. We compare the greedy strategy against a uniform offloading baseline using a batch size of 512, which is a workload that has a mix of compute- and memory-bound operations.
As shown in \autoref{fig:greedy}, the greedy algorithm outperforms uniform offloading by 1.5$\times$ when the offloading ratio is below 60\%. This is because the greedy approach identifies the turning point of each kernel's $\mathcal{EB}$ and prioritizes offloading compute-bound kernels, which are significantly less sensitive to remote memory access latencies. When the offloading ratio exceeds 60\%, the uniform and greedy algorithms converge in performance. At such high ratios, all operations become bottlenecked by the interconnect bandwidth; consequently, assigning the offloading budget to any specific kernel yields the same overall performance. This result aligns perfectly with our theoretical proof in \autoref{subsec:offloading_algo}.

\noindent\textbf{Congestion Control Efficacy:} 
\autoref{subfig:micro_cc} evaluates the GEMM performance across various matrix sizes, with and without our congestion control mechanism. By carefully regulating in-flight TMA credits, our algorithm successfully prevents unconstrained memory requests from overwhelming the intra-GPU interconnect. Consequently, across all evaluated matrix sizes, this controlled dispatch mitigates interconnect congestion and improves performance by up to 1.22$\times$.

\begin{figure}[t!]
    \centering
    \includegraphics[width=\linewidth]{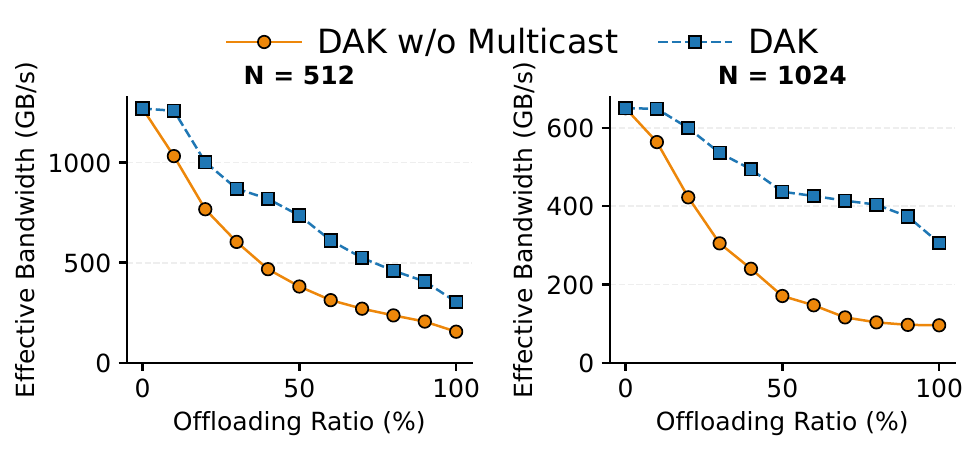}
    \vspace{-0.3cm}
    \scaption{GEMM of weights (7168,7168) and hidden states (7168, $N$).}
     \vspace{-0.4cm}
    \label{fig:multicast}
\end{figure}
\noindent\textbf{TMA Multicast Efficiency:} 
\autoref{fig:multicast} shows the performance benefits of TMA multicast. We measure the performance of GEMM over a $(7168, 7168)$ weight matrix with a $(7168, N)$ hidden state matrix. In this setup, we scale the batch dimension $N$ (which is determined by the LLM inference batch size) from 512 to 1024. At $N=512$, enabling TMA multicast improves performance by 1.3$\times$ compared to the non-multicast baseline. Crucially, this performance gap widens significantly to 2.5$\times$ as $N$ increases to 1024. This scaling behavior perfectly aligns with our earlier analysis (\autoref{tab:read_amplification}): as $N$ grows, independent requests to uncacheable host memory suffer from severe read amplification. By broadcasting a single fetched tile to multiple SMs on-chip, TMA multicast eliminates these redundant host-to-GPU transfers and improves performance.

\noindent\textbf{Kernel Alignment Efficacy:}
As shown in \autoref{subfig:micro_alignment}, \name{} ensures that the number of assigned tiles is evenly divisible across the allocated SMs to prevent tail latencies caused by partial or unbalanced execution waves. \autoref{subfig:micro_alignment} shows that this optimization improves throughput by up to 1.2$\times$.
% \input{figures/tex/eval/offloading/gh200_opt-30b_prompt_len_32_bs_8_decoding}

% \input{figures/tex/eval/offloading/gh200_opt-6.7b_prompt_len_32_bs_8_decoding}

% \input{figures/tex/eval/offloading/gh200_opt-30b-prompt_pen_32_bsz_512_decoding}

% \input{figures/tex/eval/offloading/rtx6000_opt-30b_prompt_len_32_bs_8_decoding}

% \input{figures/tex/eval/offloading/rtx6000_opt-6.7b_prompt_len_32_bs_8_decoding}

% \input{figures/tex/eval/ablation_study/gh200_opt-30b_prompt_len_32_bs_8_decoding}
% \input{figures/tex/eval/ablation_study/gh200_opt-6.7b_prompt_len_32_bs_8_decoding}

% \input{figures/tex/eval/eval3/alignment}

% \input{figures/tex/eval/eval3/congestion_control}

% \section{Limitations and Future Work}
% \label{sec:discuss}

\section{Related Work}
\label{sec:related}

\noindent\textbf{Memory Offloading for LLM Inference.}
Existing memory offloading systems ~\cite{sheng2023flexgen, aminabadi2022deepspeed, dao2022flashattention, ren2021zero, li2025fenghuang, huang2020swapadvisor, cao2025moe} rely on a \textit{copy-based prefetching}, which, as we have shown, underperforms compared to a direct-access approach. Prefetching is only advantageous if a compute-bound kernel's bandwidth demand is lower than the interconnect capacity, leaving idle bandwidth to prestage data for \textit{subsequent memory-bound operations}. However, this scenario is rare in high-throughput LLM decoding, where kernels typically saturate the interconnect or exhibit uniform compute-intensity across stages, leaving no idle bandwidth to hide later stage fetch latencies.

% Theoretically, prefetching is only advantageous if the currently executing kernel is extremely compute-bound, meaning its active memory bandwidth requirement is even lower than the CPU-GPU interconnect bandwidth. And the prefetching system will use the leftover, unutilized interconnect bandwidth to asynchronously stage data for the \textit{next memory-bound operation}. However, we observe that this is rare in high-throughput LLM serving pipelines, especially in decoding. Most LLM operations demand memory bandwidth that exceeds the limits of PCIe or NVLink-C2C, or is purely extreme compute-bound in all stages without a compute-memory bound mix.

\noindent\textbf{Direct Access with Managed Memory.}
Previous GPU direct host access often relies on Unified Virtual Memory (UVM)~\cite{li2015evaluation, chien2019performance}, which provides on-demand access via 4KB hardware page faults. However, for large-scale tensor operations, demand-paging incurs severe fault-handling latency and serialization overheads. Even on recent tightly-coupled architectures like GH200 Superchip, evaluations~\cite{schieffer2024harnessing, fusco2024understanding} show that managed memory imposes significant migration and fault-handling bottlenecks for memory-intensive workloads. \name{} uses TMA to avoid paging overhead.

\noindent\textbf{Tensor Memory Accelerator (TMA) and Async Kernels.}
Libraries like CUTLASS~3~\cite{cutlass3}, ThunderKittens~\cite{spector2024thunderkittens}, Mirage~\cite{cheng2025mirage}, and FlashAttention-3~\cite{shah2024flashattention} use TMA for warp-specialized pipelining. However, prior work restricts TMA to \textit{intra-GPU} (HBM-to-SMEM) movement. \name{} repurpose TMA for \textit{inter-tier} data movement, orchestrating direct, asynchronous accesses over the host-GPU interconnect while actively mitigating remote congestion and read amplification.
\section{Conclusion}
\label{sec:conclusion}
We show that direct access is better than prefetching for LLM inference offloading. We present \name{}, a framework that leverages the TMA to fetch weights and KV caches directly from host memory to GPU shared memory. \name{} incorporates a greedy algorithm for per-operation offloading, congestion control to prevent interconnect saturation, and TMA multicasting to eliminate read amplification. Our evaluations show that \name{} strictly outperforms prefetching, achieving near-optimal system bandwidth with performance gains of up to 3$\times$ on NVLink-C2C and 1.8$\times$ on PCIe systems.
\section{Acknowledgements}
This work is supported by gifts from Google, as well as resources provided by Lambda and the NAIRR Pilot.

\bibliographystyle{plain}
\bibliography{ref}

@inproceedings{sheng2023flexgen,
  title={Flexgen: High-throughput generative inference of large language models with a single gpu},
  author={Sheng, Ying and Zheng, Lianmin and Yuan, Binhang and Li, Zhuohan and Ryabinin, Max and Chen, Beidi and Liang, Percy and R{\'e}, Christopher and Stoica, Ion and Zhang, Ce},
  booktitle={International Conference on Machine Learning},
  pages={31094--31116},
  year={2023},
  organization={PMLR}
}

@techreport{cxlconsortium2023spec,
  title={Compute Express Link (CXL) Specification, Revision 3.1},
  author={{CXL Consortium}},
  year={2023},
  month={November},
  institution={Compute Express Link Consortium},
  url={https://computeexpresslink.org/cxl-specification/},
  note={The foundational spec enabling peer-to-peer accelerator memory pooling and fabric routing.}
}

@inproceedings{ren2021zero,
  title={$\{$Zero-offload$\}$: Democratizing $\{$billion-scale$\}$ model training},
  author={Ren, Jie and Rajbhandari, Samyam and Aminabadi, Reza Yazdani and Ruwase, Olatunji and Yang, Shuangyan and Zhang, Minjia and Li, Dong and He, Yuxiong},
  booktitle={2021 USENIX Annual Technical Conference (USENIX ATC 21)},
  pages={551--564},
  year={2021}
}

@techreport{nvidia2023gh200arch,
  title={{NVIDIA GH200 Grace Hopper Superchip Architecture}},
  author={{NVIDIA Corporation}},
  year={2023},
  institution={NVIDIA},
  type={Technical Whitepaper},
  url={https://nvdam.widen.net/s/9qz5jxz2s9/nvidia-grace-hopper-superchip-architecture-whitepaper-v1.0}
}

@article{dao2022flashattention,
  title={Flashattention: Fast and memory-efficient exact attention with io-awareness},
  author={Dao, Tri and Fu, Dan and Ermon, Stefano and Rudra, Atri and R{\'e}, Christopher},
  journal={Advances in neural information processing systems},
  volume={35},
  pages={16344--16359},
  year={2022}
}

@misc{nvidia2025rtx50series,
  title={{NVIDIA Blackwell GeForce RTX 50 Series Opens New World of AI Computer Graphics}},
  author={{NVIDIA}},
  howpublished={NVIDIA Newsroom},
  year={2025},
  month={January},
  day={6},
  url={https://nvidianews.nvidia.com/news/nvidia-blackwell-geforce-rtx-50-series-opens-new-world-of-ai-computer-graphics}
}

@techreport{nvidia2025rtxpro_blackwell_whitepaper,
  title={{NVIDIA RTX PRO Blackwell GPU Architecture}},
  author={{NVIDIA}},
  institution={NVIDIA Corporation},
  year={2025},
  url={https://www.nvidia.com/content/dam/en-zz/Solutions/design-visualization/quadro-product-literature/NVIDIA-RTX-Blackwell-PRO-GPU-Architecture-v1.0.pdf}
}

@techreport{nvidia2022hopper,
  title={{NVIDIA H100 Tensor Core GPU Architecture}},
  author={{NVIDIA Corporation}},
  year={2022},
  institution={NVIDIA},
  type={Whitepaper}
}

@article{spector2024thunderkittens,
  title={Thunderkittens: Simple, fast, and adorable ai kernels},
  author={Spector, Benjamin F and Arora, Simran and Singhal, Aaryan and Fu, Daniel Y and R{\'e}, Christopher},
  journal={arXiv preprint arXiv:2410.20399},
  year={2024}
}

@misc{cutlass3,
  title={CUTLASS 3.0: Fast Linear Algebra in CUDA C++},
  author={NVIDIA},
  howpublished={\url{https://github.com/NVIDIA/cutlass}},
  year={2023}
}

@article{shah2024flashattention,
  title={Flashattention-3: Fast and accurate attention with asynchrony and low-precision},
  author={Shah, Jay and Bikshandi, Ganesh and Zhang, Ying and Thakkar, Vijay and Ramani, Pradeep and Dao, Tri},
  journal={Advances in Neural Information Processing Systems},
  volume={37},
  pages={68658--68685},
  year={2024}
}

@article{cheng2025mirage,
  title={Mirage Persistent Kernel: A Compiler and Runtime for Mega-Kernelizing Tensor Programs},
  author={Cheng, Xinhao and Zhang, Zhihao and Zhou, Yu and Ji, Jianan and Jiang, Jinchen and Zhao, Zepeng and Xiao, Ziruo and Ye, Zihao and Huang, Yingyi and Lai, Ruihang and others},
  journal={arXiv preprint arXiv:2512.22219},
  year={2025}
}

@article{fusco2024understanding,
  title={Understanding data movement in tightly coupled heterogeneous systems: A case study with the Grace Hopper superchip},
  author={Fusco, Luigi and Khalilov, Mikhail and Chrapek, Marcin and Chukkapalli, Giridhar and Schulthess, Thomas and Hoefler, Torsten},
  journal={arXiv preprint arXiv:2408.11556},
  year={2024}
}

@inproceedings{schieffer2024harnessing,
  title={Harnessing integrated cpu-gpu system memory for hpc: a first look into grace hopper},
  author={Schieffer, Gabin and Wahlgren, Jacob and Ren, Jie and Faj, Jennifer and Peng, Ivy},
  booktitle={Proceedings of the 53rd International Conference on Parallel Processing},
  pages={199--209},
  year={2024}
}

@inproceedings{li2015evaluation,
  title={An evaluation of unified memory technology on nvidia gpus},
  author={Li, Wenqiang and Jin, Guanghao and Cui, Xuewen and See, Simon},
  booktitle={2015 15th IEEE/ACM international symposium on cluster, cloud and grid computing},
  pages={1092--1098},
  year={2015},
  organization={IEEE}
}

@inproceedings{cao2025moe,
  title={Moe-lightning: High-throughput moe inference on memory-constrained gpus},
  author={Cao, Shiyi and Liu, Shu and Griggs, Tyler and Schafhalter, Peter and Liu, Xiaoxuan and Sheng, Ying and Gonzalez, Joseph E and Zaharia, Matei and Stoica, Ion},
  booktitle={Proceedings of the 30th ACM International Conference on Architectural Support for Programming Languages and Operating Systems, Volume 1},
  pages={715--730},
  year={2025}
}

@inproceedings{chien2019performance,
  title={Performance evaluation of advanced features in CUDA unified memory},
  author={Chien, Steven and Peng, Ivy and Markidis, Stefano},
  booktitle={2019 IEEE/ACM Workshop on Memory Centric High Performance Computing (MCHPC)},
  pages={50--57},
  year={2019},
  organization={IEEE}
}

@article{gholami2024aimemorywall,
  title={{AI} and Memory Wall},
  author={Gholami, Amir and Yao, Zhewei and Kim, Sehoon and Mahoney, Michael W and Keutzer, Kurt},
  journal={IEEE Micro},
  volume={44},
  number={1},
  pages={14--24},
  year={2024},
  publisher={IEEE}
}

@inproceedings{huang2020swapadvisor,
  title={Swapadvisor: Pushing deep learning beyond the gpu memory limit via smart swapping},
  author={Huang, Chien-Chin and Jin, Gu and Li, Jinyang},
  booktitle={Proceedings of the Twenty-Fifth International Conference on Architectural Support for Programming Languages and Operating Systems},
  pages={1341--1355},
  year={2020}
}

@article{li2025fenghuang,
  title={FengHuang: Next-Generation Memory Orchestration for AI Inferencing},
  author={Li, Jiamin and Qu, Lei and Zhang, Tao and Chirkov, Grigory and Xu, Shuotao and Cheng, Peng and Zhou, Lidong},
  journal={arXiv preprint arXiv:2511.10753},
  year={2025}
}

@inproceedings{aminabadi2022deepspeed,
  title={Deepspeed-inference: enabling efficient inference of transformer models at unprecedented scale},
  author={Aminabadi, Reza Yazdani and Rajbhandari, Samyam and Awan, Ammar Ahmad and Li, Cheng and Li, Du and Zheng, Elton and Ruwase, Olatunji and Smith, Shaden and Zhang, Minjia and Rasley, Jeff and others},
  booktitle={SC22: International Conference for High Performance Computing, Networking, Storage and Analysis},
  pages={1--15},
  year={2022},
  organization={IEEE}
}

@article{kamahori2024fiddler,
  title={Fiddler: Cpu-gpu orchestration for fast inference of mixture-of-experts models},
  author={Kamahori, Keisuke and Tang, Tian and Gu, Yile and Zhu, Kan and Kasikci, Baris},
  journal={arXiv preprint arXiv:2402.07033},
  year={2024}
}

@article{xu2024pie,
  title={Pie: Pooling cpu memory for llm inference},
  author={Xu, Yi and Mao, Ziming and Mo, Xiangxi and Liu, Shu and Stoica, Ion},
  journal={arXiv preprint arXiv:2411.09317},
  year={2024}
}

@inproceedings{vijaya2025aqua,
  title={Aqua: Network-Accelerated Memory Offloading for LLMs in Scale-Up GPU Domains},
  author={Vijaya Kumar, Abhishek and Antichi, Gianni and Singh, Rachee},
  booktitle={Proceedings of the 30th ACM International Conference on Architectural Support for Programming Languages and Operating Systems, Volume 2},
  pages={48--62},
  year={2025}
}

@article{jiang2025neo,
  title={Neo: Saving gpu memory crisis with cpu offloading for online llm inference},
  author={Jiang, Xuanlin and Zhou, Yang and Cao, Shiyi and Stoica, Ion and Yu, Minlan},
  journal={Proceedings of Machine Learning and Systems},
  volume={7},
  year={2025}
}

@inproceedings{kim2025lia,
  title={Lia: A single-gpu llm inference acceleration with cooperative amx-enabled cpu-gpu computation and cxl offloading},
  author={Kim, Hyungyo and Wang, Nachuan and Xia, Qirong and Huang, Jinghan and Yazdanbakhsh, Amir and Kim, Nam Sung},
  booktitle={Proceedings of the 52nd Annual International Symposium on Computer Architecture},
  pages={544--558},
  year={2025}
}

@article{kim2024breakthrough,
  title={The breakthrough memory solutions for improved performance on llm inference},
  author={Kim, Byeongho and Cha, Sanghoon and Park, Sangsoo and Lee, Jieun and Lee, Sukhan and Kang, Shin-haeng and So, Jinin and Kim, Kyungsoo and Jung, Jin and Lee, Jong-Geon and others},
  journal={IEEE Micro},
  volume={44},
  number={3},
  pages={40--48},
  year={2024},
  publisher={IEEE}
}

@article{wolters2024memory,
  title={Memory is all you need: An overview of compute-in-memory architectures for accelerating large language model inference},
  author={Wolters, Christopher and Yang, Xiaoxuan and Schlichtmann, Ulf and Suzumura, Toyotaro},
  journal={arXiv preprint arXiv:2406.08413},
  year={2024}
}

@inproceedings{alizadeh2024llm,
  title={Llm in a flash: Efficient large language model inference with limited memory},
  author={Alizadeh, Keivan and Mirzadeh, Seyed Iman and Belenko, Dmitry and Khatamifard, S and Cho, Minsik and Del Mundo, Carlo C and Rastegari, Mohammad and Farajtabar, Mehrdad},
  booktitle={Proceedings of the 62nd Annual Meeting of the Association for Computational Linguistics (Volume 1: Long Papers)},
  pages={12562--12584},
  year={2024}
}

@inproceedings{agrawal2024taming,
  title={Taming $\{$Throughput-Latency$\}$ tradeoff in $\{$LLM$\}$ inference with $\{$Sarathi-Serve$\}$},
  author={Agrawal, Amey and Kedia, Nitin and Panwar, Ashish and Mohan, Jayashree and Kwatra, Nipun and Gulavani, Bhargav and Tumanov, Alexey and Ramjee, Ramachandran},
  booktitle={18th USENIX symposium on operating systems design and implementation (OSDI 24)},
  pages={117--134},
  year={2024}
}

@inproceedings{li2023pond,
  title={Pond: Cxl-based memory pooling systems for cloud platforms},
  author={Li, Huaicheng and Berger, Daniel S and Hsu, Lisa and Ernst, Daniel and Zardoshti, Pantea and Novakovic, Stanko and Shah, Monish and Rajadnya, Samir and Lee, Scott and Agarwal, Ishwar and others},
  booktitle={Proceedings of the 28th ACM International Conference on Architectural Support for Programming Languages and Operating Systems, Volume 2},
  pages={574--587},
  year={2023}
}

@article{gouk2023memory,
  title={Memory pooling with cxl},
  author={Gouk, Donghyun and Kwon, Miryeong and Bae, Hanyeoreum and Lee, Sangwon and Jung, Myoungsoo},
  journal={IEEE Micro},
  volume={43},
  number={2},
  pages={48--57},
  year={2023},
  publisher={IEEE}
}

@article{pearson2023interconnect,
  title={Interconnect bandwidth heterogeneity on amd mi250x and infinity fabric},
  author={Pearson, Carl},
  journal={arXiv preprint arXiv:2302.14827},
  year={2023}
}

@inproceedings{feng2023heterogeneous,
  title={Heterogeneous die-to-die interfaces: Enabling more flexible chiplet interconnection systems},
  author={Feng, Yinxiao and Xiang, Dong and Ma, Kaisheng},
  booktitle={Proceedings of the 56th Annual IEEE/ACM International Symposium on Microarchitecture},
  pages={930--943},
  year={2023}
}

@inproceedings{allen2021depth,
  title={In-depth analyses of unified virtual memory system for GPU accelerated computing},
  author={Allen, Tyler and Ge, Rong},
  booktitle={Proceedings of the International Conference for High Performance Computing, Networking, Storage and Analysis},
  pages={1--15},
  year={2021}
}

@misc{nvidia_cuda_programming_guide_tma,
  title        = {CUDA C Programming Guide: Using the Tensor Memory Accelerator (TMA)},
  author       = {{NVIDIA Corporation}},
  year         = {2024},
  url          = {https://docs.nvidia.com/cuda/cuda-programming-guide/04-special-topics/async-copies.html#using-the-tensor-memory-accelerator-tma},
  note         = {Accessed: 2026-04-15}
}

@misc{pytorch_docs_sdpa,
  title        = {PyTorch Documentation: Scaled Dot Product Attention},
  author       = {{PyTorch Contributors}},
  year         = {2024},
  howpublished = {\url{https://docs.pytorch.org/docs/stable/generated/torch.nn.functional.scaled_dot_product_attention.html}},
  note         = {Accessed: 2026-04-15}
}
% \bibliographystyle{ACM-Reference-Format}
% \bibliography{ref}
\clearpage
\appendix
\section{Formal Proof of Optimal Greedy Offload}
\label{sec:appendix_greedy_proof}
The optimization problem can be written as: 

\begin{align}
    \min_{\{x_i\}} \quad & \sum_i \frac{C_i}{\mathcal{EB}(x_i)}\label{eq:op_goal} \\
    \text{s.t.} \quad & \sum_i C_i x_i = R \sum_i C_i\label{eq:offload_constraint}, \\
    & 0 \le x_i \le 1, \quad \forall i,
\end{align}

Next, we explain why the greedy algorithm proposed in \autoref{subsec:offloading_algo} is an optimal solution to this problem.

\begin{theorem}
When $R \leq \frac{\sum_{i \in \mathcal{F}_{mem}} C_i \cdot x_i^{*}}{\sum_i C_i}$, the optimal solution allocates all offloading budget to memory-bound operations, and how the offloading budget is distributed among memory-bound operations does not affect optimality.
\end{theorem}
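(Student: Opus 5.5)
The plan is to make the latency model of \autoref{subsec:offloading_algo} explicit and then compare an arbitrary feasible allocation against the greedy one using a simple linear lower bound.

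\textbf{Step 1: latency model.} For operation $F_i$ with offloading ratio $x_i$, the latency is
\[
T_i(x_i)=\frac{C_i}{\mathcal{EB}(x_i)}=\max\Bigl(T^{comp}_i,\ \frac{C_i x_i}{B_h},\ \frac{C_i(1-x_i)}{B_g}\Bigr).
\]
This follows from $T_{\text{mem}}=\max(T_h,T_g)$. I will classify the operations as follows.
\begin{itemize}
\item An operation is memory-bound, $i\in\mathcal{F}_{mem}$, when $T^{comp}_i\le C_i/(B_h+B_g)$. Its latency is then $\max(C_i x_i/B_h,\,C_i(1-x_i)/B_g)$, and its turning point is $x_i^*=B_h/(B_h+B_g)$.
\item An operation is compute-bound when $T^{comp}_i\ge C_i/B_g$, i.e.\ it is compute-bound at $x_i=0$. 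Its latency stays flat at $T^{comp}_i$ up to its own threshold.
\end{itemize}
I will state explicitly that this dichotomy is the one assumed in the paper. Operations falling strictly between the two classes would need a separate treatment, or they can be folded into the memory-bound class with $x_i^*$ defined as the minimizer of $T_i$.

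\textbf{Step 2: the greedy allocations all have the same cost.} Consider any allocation with $x_i=0$ for every compute-bound operation and $0\le x_i\le x_i^*$ for every memory-bound operation, such that $\sum_{i\in\mathcal{F}_{mem}} C_i x_i = R\sum_i C_i$. The hypothesis on $R$ guarantees that such allocations exist.

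On $[0,x_i^*]$ the GPU-side term dominates, so $T_i(x_i)=C_i(1-x_i)/B_g$. Summing over all operations gives
\[
\sum_i T_i=\sum_{i\notin\mathcal{F}_{mem}} T^{comp}_i+\frac{1}{B_g}\sum_{i\in\mathcal{F}_{mem}} C_i-\frac{R\sum_i C_i}{B_g}.
\]
This value does not depend on how the budget is split among memory-bound operations, which proves the second clause of the statement. Call this value $V^*$.

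\textbf{Step 3: every feasible allocation costs at least $V^*$.} Take an arbitrary feasible $\{x_i\}$ and let $\beta=\sum_{i\notin\mathcal{F}_{mem}} C_i x_i\ge 0$ be the budget it gives to compute-bound operations. Two lower bounds hold for all $x_i\in[0,1]$, because each is just one term inside the max:
\begin{itemize}
\item for memory-bound operations, $T_i\ge C_i(1-x_i)/B_g$;
\item for compute-bound operations, $T_i\ge T^{comp}_i$.
\end{itemize}
The memory-bound operations then carry budget $R\sum_i C_i-\beta$. Summing the bounds gives
\[
\sum_i T_i\ \ge\ V^*+\frac{\beta}{B_g}.
\]
Hence no allocation beats $V^*$. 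Moreover, any allocation that moves budget to compute-bound operations ($\beta>0$) is strictly worse. This gives the contradiction form used in the main text: an optimum that places any budget on compute-bound operations would cost more than a greedy allocation.

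For the same reason, an allocation that pushes some memory-bound $x_i$ beyond $x_i^*$ can only be as good or worse. The host term then exceeds the linear lower bound. Redistributing that excess below the turning points of other memory-bound operations is always possible under the hypothesis on $R$.

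\textbf{Main obstacle.} The algebra is routine. The step I expect to need the most care is Step 1: pinning down the precise definitions of ``memory-bound'' and ``compute-bound'' so that the two lower bounds in Step 3 are valid for every $x_i$, and so that the flat and linear regions of \autoref{fig:effective_bandwidth} match the turning points $x_i^*$ used in the hypothesis. I would first fix the class definitions via $T^{comp}_i$ versus $C_i/(B_h+B_g)$ and $C_i/B_g$ as above, and check that the figure's curves agree with them.
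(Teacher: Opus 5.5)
Your proof is correct, but for the key step it takes a different route from the paper.

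\textbf{The paper's route.} It argues by local exchange. It assumes an optimum with $x_j>0$ for some compute-bound $F_j$ and uses the bound on $R$ to find a memory-bound $F_k$ with $x_k<x_k^*$. It then shifts an infinitesimal amount of budget from $F_j$ to $F_k$ and gets a contradiction from monotonicity alone: compute-bound latency is non-decreasing, and memory-bound latency is strictly decreasing below $x_k^*$. Finally, it evaluates the objective on the resulting constrained set exactly as in your Step~2.

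\textbf{Your route.} You replace the exchange with a global affine minorant. Each $T_i$ is bounded below by one term of its max, this minorant agrees with the objective on the greedy set, and on the budget hyperplane it equals $V^*+\beta/B_g$. In effect this is a weak-duality certificate for the budget constraint with multiplier $1/B_g$.

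\textbf{What your approach buys.} You need neither the existence of an optimum nor a perturbation argument; the paper's ``every optimum satisfies the constraint, and the objective is constant there'' step tacitly uses existence. You also get a quantitative penalty $\beta/B_g$ for budget misplaced on compute-bound operations. The same inequality is strict when a memory-bound $x_m>x_m^*$, because the host term then strictly exceeds $C_m(1-x_m)/B_g$. So your ``as good or worse'' sharpens to ``strictly worse'', and your redistribution remark becomes unnecessary. This also yields the memory-bound half of the paper's constraint $x_i\le x_i^*$, which the paper states without a separate exchange between memory-bound operations.

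\textbf{What the paper's approach buys.} Its contradiction step uses only the qualitative shape of the curves in the effective-bandwidth figure, not your explicit thresholds $T_i^{comp}\le C_i/(B_h+B_g)$ and $T_i^{comp}\ge C_i/B_g$. However, its final evaluation step relies on the same explicit max-form model. So making the dichotomy explicit, as you do, costs nothing.

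If you pursue your aside on intermediate operations, define $x_i^*$ as the left endpoint $1-T_i^{comp}B_g/C_i$ of the interval-valued minimizer. Only then does $T_i=C_i(1-x_i)/B_g$ hold on all of $[0,x_i^*]$, which your Steps~2 and~3 need.
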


\begin{proof}

% \textit{Intuition:} The key intuition is that offloading cannot help compute-bound operations but helps memory-bound ones (before saturating the aggregate bandwidth), so the optimal strategy is to spend offloading budget only where it strictly reduces latency.

We prove this by contradiction. Suppose there exists an optimal solution in which some compute-bound operation $F_j \in \mathcal{F}_{comp}$ has $x_j > 0$.
Since $R \leq \frac{\sum_{i \in \mathcal{F}_{mem}} C_i x_i^*}{\sum_i C_i}$,
there must exist at least one memory-bound operation $F_k \in \mathcal{F}_{mem}$ such that $x_k < x_k^*$; otherwise, if every memory-bound operation had already reached its threshold, then the total offloaded size assigned to memory-bound operations alone would be at least $\sum_{i \in \mathcal{F}_{mem}} C_i x_i^*$, contradicting the above bound.

Now move an infinitesimal amount of offloading budget from $F_j$ to $F_k$, while keeping the total offloaded size unchanged. Let us use $\mathcal{T}(x_i) = \frac{C_i}{\mathcal{EB}(x_i)}$ to denote the latency contribution of operation $F_i$.

For the compute-bound operation $F_j$, $\mathcal{T} (x_i)$ is strictly non-decreasing in $x_j$; thus we have $T(\hat{x_{j}}) \leq T(x_j)$. For the memory-bound operation $\mathcal{T} (x_i)$ is strictly decreasing when $x_i < x_i^*$; thus we have $T(\hat{x_{k}}) < T(x_k)$. Therefore, decreasing $x_j$ and increasing $x_k$ strictly decreases the objective value, contradicting the optimality of the original solution. Hence, any optimal solution satisfies the following constraint:

\begin{align}
    \begin{cases}
        x_i \leq x_i^{*} & \forall i \in \mathcal{F}_{mem} \\
        x_i = 0 & \forall i \in \mathcal{F}_{comp}
    \end{cases}
    \label{eq:case_1_constraint}
\end{align}

% no optimal solution assigns positive offloading budget to compute-bound operations while there still exists a memory-bound operation with offloading ratio below its threshold. 

When the constraint in~\ref{eq:case_1_constraint} holds, the objective in~\ref{eq:op_goal} only depends on memory-bound operations. Rewriting the objective, we have $\min_{\{x_i\}} \sum_i \frac{C_i (1 - x_i)}{B_g}$. Since $B_g$ is a constant, this is equivalent to minimizing $\sum_i C_i (1 - x_i)$, or equivalently maximizing $\sum_i C_i x_i$. Under the global offloading constraint in~\ref{eq:offload_constraint}, $\sum_i C_i x_i$ is fixed. Therefore, the objective reduces to a constant, and any feasible solution satisfying the constraints is optimal. How the offloading budget is distributed among memory-bound operations does not affect optimality in this regime.

\end{proof}

\begin{theorem}
When $\frac{\sum_{i \in \mathcal{F}_{mem}} C_i \cdot x_i^{*}}{\sum_i C_i} < R \leq \frac{\sum_{i} C_i \cdot x_{i}^{*}} {\sum_{i} C_i}$, the optimal solution allocates offloading budget to all memory-bound operations so that they reach their peak effective bandwidth. It then allocates the remaining budget to compute-bound operations; the exact distribution among compute-bound operations does not affect optimality, as long as none of them is pushed beyond its compute-bound threshold.
\end{theorem}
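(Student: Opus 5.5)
We mirror the contradiction-plus-exchange argument of the preceding theorem, now applied in two directions. Write $x_i^*$ for the turning point of operation $F_i$: the balance point $\frac{B_h}{B_h+B_g}$ for memory-bound operations, and the compute-bound threshold (where $T_h = T_{\text{comp}}$) for compute-bound ones. Write $\mathcal{T}(x_i)=C_i/\mathcal{EB}(x_i)$ as before. From the effective-bandwidth model we take the following shape facts about $\mathcal{T}$:
\begin{itemize}
\item for $F_i\in\mathcal{F}_{mem}$, $\mathcal{T}$ is strictly decreasing on $[0,x_i^*)$ and strictly increasing on $(x_i^*,1]$;
\item for $F_i\in\mathcal{F}_{comp}$, $\mathcal{T}$ is constant on $[0,x_i^*]$ and strictly increasing beyond $x_i^*$.
\end{itemize}

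\textbf{Step 1: every memory-bound operation sits at its peak, $x_k=x_k^*$.} Suppose an optimal solution has some $F_k\in\mathcal{F}_{mem}$ with $x_k\neq x_k^*$.
\begin{itemize}
\item If $x_k<x_k^*$, the lower bound on $R$ forces some other operation to be above its turning point or to be a compute-bound operation holding budget. In either case, shifting an infinitesimal amount of budget from that operation to $F_k$ strictly lowers $\mathcal{T}(x_k)$ and does not raise the donor's latency. This is exactly the computation in the previous theorem's proof.
\item If $x_k>x_k^*$, then $R\le \sum_i C_i x_i^*/\sum_i C_i$ implies some operation $F_j$ has $x_j<x_j^*$. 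Move budget from $F_k$ to $F_j$. This strictly decreases $\mathcal{T}(x_k)$. The receiver's latency does not increase: it is flat if $F_j$ is compute-bound below threshold, and it decreases if $F_j$ is memory-bound below peak.
\end{itemize}
Both cases contradict optimality, so $x_k=x_k^*$ for all $F_k\in\mathcal{F}_{mem}$.

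\textbf{Step 2: no compute-bound operation exceeds its threshold, $x_j\le x_j^*$.} If some $F_j\in\mathcal{F}_{comp}$ has $x_j>x_j^*$, the upper bound on $R$ together with Step 1 yields another compute-bound $F_l$ with $x_l<x_l^*$. Shifting budget from $F_j$ to $F_l$ strictly reduces $\mathcal{T}(x_j)$ and leaves $\mathcal{T}(x_l)$ unchanged, a contradiction.

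\textbf{Step 3: the objective is constant on the resulting set.} Once $x_k=x_k^*$ for all memory-bound operations and $0\le x_j\le x_j^*$ for all compute-bound ones, the objective equals
\[
\sum_{k\in\mathcal{F}_{mem}}\mathcal{T}(x_k^*)+\sum_{j\in\mathcal{F}_{comp}} T_{\text{comp},j},
\]
which does not depend on the choice of the $x_j$. The compute-bound budget $R\sum_i C_i-\sum_{k\in\mathcal{F}_{mem}} C_k x_k^*$ is nonnegative by the lower bound on $R$ and at most $\sum_{j\in\mathcal{F}_{comp}} C_j x_j^*$ by the upper bound. So this set is nonempty, and every allocation in it attains the same optimal value. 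This proves both claims of the theorem.

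\textbf{Main obstacle.} The hard part is the existence claims in the exchange steps, in particular Step 1 when $x_k<x_k^*$. There the donor might be a memory-bound operation that is itself below its peak, and then the naive exchange fails. To avoid enumerating donor types, I would first prove Step 2 together with the statement ``no operation exceeds its turning point,'' using the second exchange. With no operation above its turning point, the lower bound $R>\sum_{\mathcal{F}_{mem}} C_i x_i^*/\sum_i C_i$ forces some compute-bound operation to hold positive budget. That compute-bound operation is then the donor for $F_k$: giving up budget leaves its latency flat, since it stays at or below threshold, while $F_k$ strictly improves.
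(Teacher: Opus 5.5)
Your argument is correct, but for this theorem the paper does not use an exchange argument. It argues directly from separability. The objective is a sum of one-variable terms $\mathcal{T}(x_i)$. Each memory-bound term is minimized exactly at $x_i^*$, and each compute-bound term anywhere on $[0,x_i^*]$. The two-sided bound on $R$ makes an allocation that minimizes every term at once budget-feasible; this is your nonemptiness check in Step 3, which the paper leaves implicit. The resulting value $\sum_{i\in\mathcal{F}_{mem}} C_i/(B_h+B_g)+\sum_{i\in\mathcal{F}_{comp}} C_i/B_i$ is a lower bound for every feasible point, and it is attained, so optimality follows in one line. Your necessity claims then come for free: an optimum must attain each termwise minimum, and the memory-bound minimizer is unique.

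Your Steps 1--2 make that necessity explicit, which the paper only asserts. On their own, though, they do not show that the constant value on your set is optimal. You need that a minimizer exists, which holds because each $\mathcal{T}$ is a maximum of affine functions and the feasible set is compact. Alternatively, observe that your Step 3 value is the sum of termwise minima and hence a global lower bound.

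Also, the ``main obstacle'' you flag does not arise. The counting argument in your first bullet of Step 1 already yields a donor that is either above its turning point or a compute-bound operation holding positive budget, never a below-peak memory-bound operation. Your reordering is valid but unnecessary.
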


\begin{proof}

Since $\frac{\mathcal{C}_i}{\mathcal{EB}(x_i)}$ is decreasing and then increasing with $x_i$ for memory-bound operations, and strictly non-decreasing with $x_i$ for compute-bound operations. The optimal solution would assign $x_i = x_i^{*}$ for all memory-bound operations, minimizing their operation latency. In addition, due to the constraint on $R \leq \frac{\sum_{i} C_i \cdot x_{i}^{*}} {\sum_{i} C_i}$, we can make sure no compute-bound operations go beyond their threshold. In this case, we can minimize $\frac{\mathcal{C}_i}{\mathcal{EB}(x_i)}$ for both memory-bound and compute-bound operations, and thus minimize the E2E latency. Thus, an optimal solution would satisfy the following constraint:

\begin{align}
    \begin{cases}
        x_i = x_i^{*} & \forall i \in \mathcal{F}_{mem} \\
        x_i \leq x_i^{*} & \forall i \in \mathcal{F}_{comp}
    \end{cases}
    \label{eq:case_2_constraint}
\end{align}

Under this constraint, we can rewrite the objective~\ref{eq:op_goal} as $\sum_{i \in \mathcal{F}_{mem}} \frac{C_i}{B_h + B_g} + \sum_{i \in \mathcal{F}_{comp}} \frac{C_i}{B_i}$, which is a constant. Hence, any feasible solution which satisfies Constraint~\ref{eq:case_2_constraint} would be optimal. How the offloading budget is distributed among compute-bound operations does not affect optimality in this regime.

\end{proof}

\begin{theorem}
When $R > \frac{\sum_{i} C_i \cdot x_{i}^{*}} {\sum_{i} C_i}$, Any feasible solution is optimal, and how the offloading budget is distributed among all operations does not affect optimality.
\end{theorem}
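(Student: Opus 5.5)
The plan has two steps, mirroring the previous two theorems. First, an exchange argument pins down the structure of every optimal solution. Second, I show the objective is constant on the set of solutions with that structure. I read the statement "any feasible solution is optimal" in the sense of the third phase of the greedy algorithm: once every operation has been pushed to at least its turning point $x_i^*$, how the remaining budget is distributed does not matter. Literally every feasible point cannot be optimal, since the regime of $R$ still allows allocations that leave some memory-bound operation below $x_i^*$, and those are strictly worse.

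\textbf{Latency model.} I will use the following per-operation latency, which is implicit in the effective-bandwidth discussion:
\[
\mathcal{T}_i(x)=\max\Bigl(T^{comp}_i,\ \tfrac{C_i(1-x)}{B_g},\ \tfrac{C_i x}{B_h}\Bigr),
\]
with $T^{comp}_i=0$ effectively for memory-bound operations. The turning point $x_i^*$ is the smallest $x$ at which the host term $C_i x/B_h$ becomes the maximum. The two properties I need, both read off \autoref{fig:effective_bandwidth}, are:
\begin{itemize}
\item $\mathcal{T}_i(x)\ge C_i x/B_h$ for all $x$, with equality exactly when $x\ge x_i^*$;
\item $\mathcal{T}_i$ is non-increasing on $[0,x_i^*]$ and equals $C_i x/B_h$ on $[x_i^*,1]$.
\end{itemize}

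\textbf{Step 1 (exchange argument).} Suppose an optimal solution has some $x_k<x_k^*$. Since $R\sum_i C_i>\sum_i C_i x_i^*$, the budget constraint~\ref{eq:offload_constraint} forces some $j$ with $x_j>x_j^*$. Move an infinitesimal amount $\delta$ of offloaded data from $F_j$ to $F_k$; this preserves feasibility. The latency of $F_j$ then drops by exactly $\delta/B_h$, because it lies on its host-bound branch. The latency of $F_k$ does not increase, because $\mathcal{T}_k$ is non-increasing below $x_k^*$. So the objective strictly decreases, a contradiction. Repeating the move while any $x_k<x_k^*$ remains shows that every optimal solution satisfies $x_i\ge x_i^*$ for all $i$.

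\textbf{Step 2 (constant objective).} On the set where $x_i\ge x_i^*$ for all $i$, the objective is
\[
\sum_i \frac{C_i x_i}{B_h}=\frac{R\sum_i C_i}{B_h},
\]
which is a constant by~\ref{eq:offload_constraint}. For the global lower bound, sum the pointwise inequality $\mathcal{T}_i(x_i)\ge C_i x_i/B_h$ over $i$: every feasible point has objective at least this constant. Hence every feasible solution with all $x_i\ge x_i^*$ attains the minimum, and the distribution of the excess budget among operations is irrelevant.

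\textbf{Main obstacle.} The delicate step is justifying the two properties of $\mathcal{T}_i$ for compute-bound operations. I must check that $x_i^*$, defined as the threshold where $C_i x/B_h$ overtakes $T^{comp}_i$, is also the point beyond which $C_i x/B_h$ dominates the $C_i(1-x)/B_g$ term. I would argue this from the definition of compute-bound: at $x_i^*$ we have $T^{comp}_i\ge C_i(1-x_i^*)/B_g$, and the HBM term is decreasing in $x$, so it stays below $C_i x/B_h$ for all $x\ge x_i^*$.
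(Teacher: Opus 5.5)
Your proposal is correct and follows the paper's own proof. Both use an exchange argument to show every optimal solution has $x_i \ge x_i^*$ for all $i$, and then observe that on this set the objective collapses to the constant $\sum_i C_i x_i/B_h = R\sum_i C_i/B_h$. Your summed pointwise bound $\mathcal{T}_i(x)\ge C_i x/B_h$ is a small tightening: it certifies that constant as the global minimum without appealing to existence of an optimum, and it uses the uniform denominator $B_h$ where the paper's rewriting is sloppy. Your reading of ``any feasible solution'' as ``any feasible solution with all $x_i\ge x_i^*$'' is exactly what the paper's argument establishes.
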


\begin{proof}

We first show show that for any optimal solution, it must satisfy $x_{i} \ge x_{i}^*, \forall \in \mathcal{F}$. We prove this by contradiction. Suppose there exists an optimal solution $\mathcal{S}$ in which some operation has $x_j < x_{j}^*$. Due to the constraint on offloading budget $R$, there must exist some $x_k$ with $x_k > x_{k}^*$. Then consider another solution $\hat{\mathcal{S}}$ with $\hat{x_j} = x_{j}^* > x_j$. Since we increase the offloading size of operation $j$, we can reduce the offloading size of operation $k$ while satisfying the overall offloading budget constraint for $\hat{\mathcal{S}}$. Thus, we have $\hat{x_{k}} < x_{k}$. Note that the effective bandwidth $\mathcal{EB}(x_i)$ is non-decreasing with $x_i \leq x_i^{*}$ and strictly decreasing with $x_i > x_i^{*}$. Thus, the latency of operation $j$ in solution $\hat{\mathcal{S}}$ would be no greater than its latency in solution $\mathcal{S}$, and the latency of operation $k$ in solution $\hat{\mathcal{S}}$ would be smaller than its latency in solution $\mathcal{S}$, contradicting the assumption that $\mathcal{S}$ is optimal.

Thus, we can rewrite the objective~\ref{eq:op_goal} as $\sum_{i \in \mathcal{F}_{mem}} \frac{C_i}{B_h} \cdot x_{i} + \sum_{i \in \mathcal{F}_{comp}} \frac{C_i}{B_i} \cdot x_i = \sum_{i \in \mathcal{F}_{mem}} {C_i} \cdot x_{i} + \sum_{i \in \mathcal{F}_{comp}} C_i \cdot x_i = \sum_{i \in \mathcal{F}} C_i \cdot x_i$, which is a constant, \ie the total offloading budget (due to constraint~\ref{eq:offload_constraint}).

Putting everything together, our offloading algorithm satisfies constraint~\ref{eq:case_1_constraint} in Phase~1 and constraint~\ref{eq:case_2_constraint} in Phase~2 respectively, making it optimal. In Phase~3, any feasible allocation is optimal. Therefore, the proposed offloading algorithm is optimal.

\end{proof}

\begin{figure}[t!]
    \centering
    \includegraphics[width=0.98\linewidth]{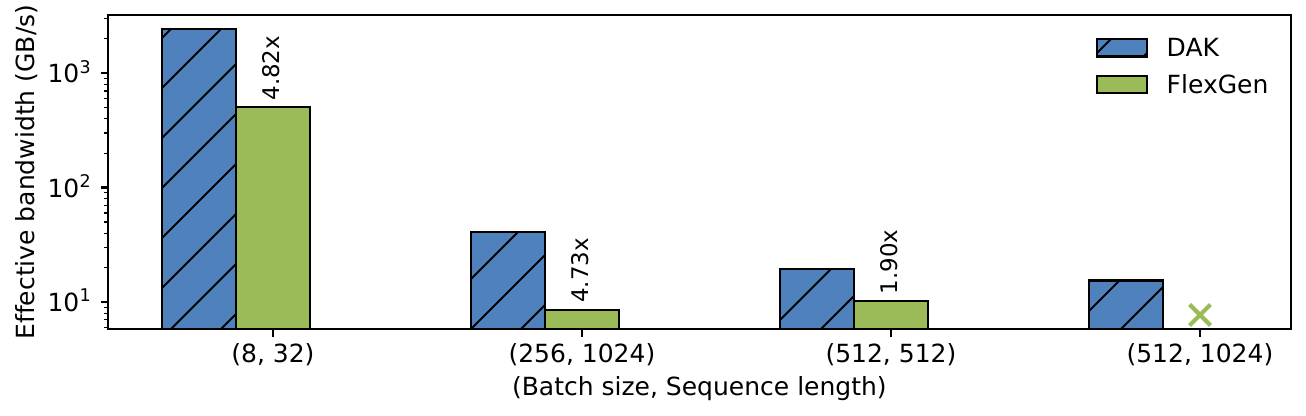}
    \footnotesize
     \begin{tabular}{ccccc}
    \hline
    bsz & prompt\_len & model  & KV cache  & global offload ratio \\ \hline
    8   & 32         & 13.3 GB      & 0.27    GB    & 0\%                \\ 
    256  & 1024       & 13.3 GB      & 141.73  GB    & 37\%                 \\ 
    512 & 512        & 13.3 GB      & 146.03  GB    & 39\%                 \\ 
    512 & 1024        & 13.3 GB      & 283.47  GB    & 67\%                 \\ \hline
    \end{tabular}
    
    \scaption{Figure shows the performance for GH200 under different configurations. The table shows the memory footprint for different configurations and the corresponding global memory offload ratio.}
     \vspace{-0.2cm}
    \label{fig:bw-config-6.7b}
\end{figure}
\section{Additional Results}
\label{appendix:addi}
\autoref{fig:bw-config-6.7b} shows the OPT 6.7B's performance for GH200 under different configurations. We vary the batch size and prompt length for OPT-6.7B. Large batch sizes and extended prompt lengths increase the total memory footprint well beyond the available GPU HBM, which
dynamically dictates the required global offload ratio. \name{} achieves 1.9x to 4x performance improvement compared to FlexGen.

\end{document}